\title{Constructive expressive power of population protocols}
\author{Mikhail Raskin}{Technical University of Munich, Germany}{raskin@mccme.ru, raskin@in.tum.de}{ https://orcid.org/0000-0002-6660-5673 }{}
\authorrunning{M. Raskin}
\keywords{population protocols, unreliable communication, protocol verification}
\begin{document}
\maketitle

\begin{abstract}
Population protocols are a model of distributed computation
intended for the study of networks of independent computing agents
with dynamic communication structure.
Each agent has a finite number of states, and communication opportunities
occur nondeterministically, allowing the agents involved to change their 
states based on each other's states.
Population protocols are often studied in terms
of reaching a consensus on whether the input configuration
satisfied some predicate.

In the present paper we propose an alternative point of view.
Instead of studying the properties of inputs that a protocol can recognise,
we study the properties of outputs that a protocol eventually ensures.
We define constructive expressive power.
We show that for general population protocols and immediate observation 
population protocols
the constructive expressive power coincides with the 
normal expressive power.

Immediate observation protocols also preserve 
their relatively low verification complexity in the
constructive expressive power setting.
\end{abstract}

\section{Introduction}

Population protocols have been
introduced in \cite{conf/podc/AngluinADFP04,journals/dc/AngluinADFP06}
as a restricted yet useful subclass of general distributed protocols.
Each agent in a population protocol has a fixed amount of local storage,
and an execution consists of 
selecting pairs of agents and letting them update their states
based on an interaction.
The choice of pairs is assumed to be performed by an adversary 
subject to a fairness condition.
The fairness condition ensures
that the adversary must allow the protocol to progress.

Typically, population protocols are studied from the point of view
of recognising some properties of an input configuration.
In this context population protocols and their subclasses have been studied 
from the point of view of expressive power \cite{journals/dc/AngluinAER07},
verification complexity
\cite{conf/podc/BlondinEJM17,esparza2019parameterized,journals/corr/abs-1912-06578},
time to convergence \cite{conf/wdag/AngluinAE06,conf/wdag/DotyS15},
necessary state count \cite{blondin2019succinct},
etc.

The original target application
of population protocols and related models
is modelling networks of restricted sensors,
starting from
the original paper \cite{conf/podc/AngluinADFP04} on population protocols.
Of course, in the modern applications the cheapest microcontrollers
typically have tens of thousands of bits of volatile memory
permitting the use of simpler and faster algorithms for recognising 
properties of an input configuration.
So on the one hand, the original motivation for the restrictions in the 
population protocol model seems to have less relevance.
On the other hand, verifying distributed systems
benefits from access to a variety of restricted models
with a wide range of trade-offs between
the expressive power and verification complexity,
as most problems are undecidable
in the unrestricted case.
Complex, unrestricted, and impossible to verify distributed deployments
lead to undesirable and hard to predict and sometimes even diagnose situations 
such as so called gray failures \cite{Huang_2017} and similar.
On the other hand,
desirable behaviours of distributed systems
go beyond
consensus about a single property of initial configuration.
We find it natural to study what classes of properties 
a distributed system can eventually reach and then maintain indefinitely.

In the present paper we introduce a notion of constructive expressive
power of a protocol model,
formalising this question.
We show that for population protocols, as well as for 
a useful subclass of population protocols,
immediate observation (or one-way) population protocols,
the constructive expressive power coincides
with the classical expressive power.

We also show that immediate observation population protocols
preserve their relatively low verification complexity in the 
constructive expressive power setting.

The rest of the present paper is organised as follows.
First we provide the basic definitions and the previously
known results about expressive power from the point of view
of computing predicates.
We continue by defining constructive expressive power.
In the next section we establish the constructive expressive
power of population protocols.
Then in the following section we establish
the constructive expressive power of immediate observation protocols
and show that verification remains in $\mathbf{PSPACE}$ 
in this setting just like in the setting of computing predicates.
The paper ends with conclusion and future directions.

\section{Basic definitions}

First we define the population protocols,
as well as their subclass, immediate observation population protocols.

\begin{definition}
        A
        \emph{population protocol}
        is defined by a finite set of 
        \emph{states}~$Q$
        and a \emph{step relation}~$Step\subset{}Q^2\times{}Q^2$.
        When there is no ambiguity about the protocol,
        we abbreviate $((q_1,q_2),(q'_1,q'_2))\in{}Step$
        as $(q_1,q_2)\mapsto(q'_1,q'_2)$
        and call the quadruple
        $(q_1,q_2)\mapsto(q'_1,q'_2)$ a \emph{transition}.

        A population protocol is
        an~\emph{immediate observation} population protocol
        if  $(q_1,q_2)\mapsto(q'_1,q'_2)$
        implies $q_2=q'_2$.
        We say that an agent in the state~$q_1$
        changes its state to $q'_1$
        by \emph{observing} $q_2$
        and denote it
        $q_1\xrightarrow{q_2}q'_1$.

        A \emph{configuration}
        of a population protocol is a multiset
        of states $C:Q\to\mathbb{N}$.
        We use the notation $\Lbag{}q_1,\ldots,q_k\Rbag$
        for a multiset $C$ such that $C(q)$ is the number of 
        times $q$ occurs among $q_1,\ldots,q_k$.
	In some cases it is also convenient to interpret
	a multiset as a tuple or a vector 
	with nonnegative integer coordinates
        and denote it e.g. $C\in\mathbb{N}^Q$.
        Arithmetic operations and predicates apply to multisets
        pointwise (coordinatewise).
        The \emph{size} of a configuration~$C$
        is the sum of images of all states,
        $|C| = \sum_{q}C(q)$.
        An \emph{execution} of a population protocol
        is a finite or infinite sequence~$(C_0, C_1. \ldots)$
        of configurations such that
        for each $j$ between $1$ and the execution length we have
        $C_j\geq\Lbag{}q_1,q_2\Rbag$
        and
        $C_{j+1} = C_j - \Lbag{}q_1,q_2\Rbag + \Lbag{}q'_1,q'_2\Rbag$
        where $((q_1,q_2),(q'_1,q'_2))\in{}Step$.
        In other words, we let two agents with states
        $q_1$ and $q_2$ interact in some way permitted by the $Step$ relation.
\end{definition}

\begin{example}
        \label{ex:simple-protocols}
        Consider the set of states $\{q_0,q_1,q_2,q_3\}$.
        The step relation described by
        $(q_1,q_1)\mapsto(q_0,q_2)$,
        $(q_2,q_1)\mapsto(q_0,q_3)$,
        $(q_2,q_2)\mapsto(q_1,q_3)$,
        $(q_0,q_3)\mapsto(q_3,q_3)$,
        $(q_1,q_3)\mapsto(q_3,q_3)$,
        $(q_2,q_3)\mapsto(q_3,q_3)$
        is a population protocol but not an immediate observation population protocol.
        An example execution is:
        $\Lbag{}q_1,q_1,q_1{}\Rbag = (0.3,0,0),
	(1,1,1,0),
(2,0,0,1),
(1,0,0,2),\allowbreak
(0,0,0,3)$.
        Here we use the first two steps, then twice the fourth step.

        On the other hand, 
        $(q_1,q_1)\mapsto(q_2,q_1)$,
        $(q_2,q_2)\mapsto(q_3,q_2)$,
        $(q_0,q_3)\mapsto(q_3,q_3)$,
        $(q_1,q_3)\mapsto(q_3,q_3)$,
        $(q_2,q_3)\mapsto(q_3,q_3)$
        is an immediate observation population protocol that can also be described
        as 
        $q_1\xrightarrow{q_1}q_2$,
        $q_2\xrightarrow{q_2}q_3$,
        $q_0\xrightarrow{q_3}q_3$,
        $q_1\xrightarrow{q_3}q_3$,
        $q_2\xrightarrow{q_3}q_3$.
        An example execution is:
        $(0,3,0,0),(0,2,1,0),(0,1,2,0),(0,1,1,1),(0,0,1,2),(0,0,0,3)$.
        Here we use the first two possible steps, then the last two possible steps.
\end{example}

\begin{remark}
Note that all the configurations in an execution have the same size.
\end{remark}

We often consider executions with the steps chosen by an adversary.
However, we need to restrict adversary to ensure 
that some useful computation remains possible.
To prevent the adversary from e.g. only letting one pair of agents to interact,
we require the executions to be fair.
The fairness condition can also be described 
by comparison with random choice of steps to perform:
fairness is a combinatorial way to exclude a zero-probability set of bad executions.

\begin{definition}
        Consider a population protocol $(Q,Step)$.

        A configuration~$C'$ is \emph{reachable} from configuration~$C$ 
        if{}f there is a finite execution
        with the initial configuration $C$
        and the final configuration $C'$.

        A finite execution is \emph{fair}
        if it is not a beginning of any longer execution.

        An infinite execution $C_0, C_1, \ldots$ is \emph{fair}
        if for every configuration~$C'$
        either~$C$ is not reachable from some~$C_j$
        (and all the following configurations),
        or $C$ occurs among $C_j$ infinitely many times.
\end{definition}

\begin{example}
        The finite executions in the example~\ref{ex:simple-protocols} are fair.
\end{example}

The most popular notion of expressive power for 
population protocols is computing predicates,
defined in the following way.

\begin{definition}
        Consider a population protocol $(Q,Step)$
        with additionally defined
        nonempty set of \emph{input states}~$I\subset{}Q$,
        output alphabet $O$
        and \emph{output function} $o: Q\to{}O$.

        \emph{Support} of a configuration $C$ is the set
        of all states with nonzero images,
        The states belonging to the support of a configuration
        are also called \emph{inhabited} in the configuration.
        $\mathrm{supp}\, C = Q\setminus{}C^{-1}(0)$.

        A configuration $C$ is an \emph{input configuration}
        if its support is a subset of the set of the input states,
        $\mathrm{supp}\, C\subset I$.

        A configuration $C$ is a $b$-\emph{consensus}
        for some $b\in{}O$
        if the output function yields $b$
        for all the inhabited states,
        A configuration is a \emph{stable $b$-consensus}
        if it is a $b$-consensus together 
        with all the configurations reachable from it.
        A configuration is called 
        just a \emph{consensus}
        or a \emph{stable consensus}
        if it is a $b$-consensus (respectively stable $b$-consensus)
        for some $b$.

        A protocol \emph{computes} a function
        $\varphi:\mathbb{N}^I\to{}O$
        if{}f 
        for each input configuration~$C$
        every fair execution with initial configuration $C$
        contains a stable $\varphi(C)$-consensus.
        We usually use the protocols computing predicates,
        which corresponds to $O=\{true,false\}$.
\end{definition}

\begin{example}
        If we define the set of input states $I=\{q_1\}$,
        output set $O=\{true,false\}$
        and the output function $o(q){=}(q=q_3)$,
        both protocols in the example~\ref{ex:simple-protocols}
        compute the predicate $\varphi(C)=(C(q_1)\geq{}3)$.
\end{example}

The expressive power of population protocols
and immediate observation population protocols has been studied 
in~\cite{journals/dc/AngluinAER07}.

\begin{definition}
        A \emph{cube} is the set of configurations
        defined by a lower and an upper bound for the number of agents in each state.
        The lower bound can be zero, the upper bound can be infinite.
        A \emph{counting set} is a finite union of cubes.

        An \emph{integer cone} is the set of multisets defined
        by a base multiset (or just base) $B$
        and
        a finite number of period multisets (periods) $v_j$.
        A multiset belongs to the cone if it can be represented 
        as a sum of the base multiset and
        a non-negative integer combination of periods.
        A \emph{semilinear} set is a finite union of integer cones.
\end{definition}

\begin{theorem}[\cite{journals/dc/AngluinAER07}]
        \label{thm:rec-power}
Population protocols can compute membership in semilinear sets
and no other predicates.

Immediate observation population protocols can compute membership in counting
sets and no other predicates.
\end{theorem}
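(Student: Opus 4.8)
The plan is to prove each statement as two inclusions: a \emph{completeness} direction, showing every set in the target class (semilinear sets, respectively counting sets) is computable by the corresponding model, and a \emph{soundness} direction, showing no other predicate is computable. So there are four sub-proofs, two constructive and two giving upper bounds on expressiveness. The constructive directions are routine; the soundness directions carry the real difficulty.

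For completeness in the general case I would use that semilinear sets are exactly the Presburger-definable sets, which are generated from threshold predicates $\sum_i a_i x_i \geq c$ and modular predicates $\sum_i a_i x_i \equiv c \pmod m$ by negation, conjunction and disjunction. It then suffices to build a population protocol for each atomic predicate and to show closure under the Boolean operations. Negation swaps the output function; conjunction and disjunction use a product construction on disjoint copies of the two state spaces, combining outputs only once both components have stabilised. The atomic protocols are leader-style counters: agents carry a bounded partial sum, combine additively while capping at the threshold or reducing modulo $m$, and a one-way broadcast of the resulting verdict drives the whole population to agreement. For immediate observation protocols the target is smaller, since a cube is a conjunction of constraints $x_q \geq \ell$ and $x_q \leq u$; here I would build one-way detectors for ``at least $\ell$ agents in $q$'' (as already illustrated by Example~\ref{ex:simple-protocols}) and obtain ``at most $u$'' by negation, then close under Boolean operations, checking that products of immediate observation protocols remain immediate observation because an observed agent's state is never altered in either component.

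The soundness directions are where the work lies. For general population protocols I would fix an output $b$ and try to show the set of inputs computing $b$ is semilinear. A first step is that the set of stable $b$-consensuses is itself semilinear: a configuration is one exactly when its support consists only of $b$-states and is closed under the transitions its inhabited states enable, and since there are finitely many supports and the enabling conditions are threshold constraints on counts, this is a finite union of cubes, and in particular semilinear. Using the standard equivalence that a function-computing protocol outputs $b$ on $C$ precisely when some stable $b$-consensus is reachable from $C$, the accepted input set becomes the reachability predecessor of a semilinear target. For immediate observation protocols I would instead exploit a monotonicity (``copycat'') property: since observing never changes the observed agent, extra agents can be added that mimic existing ones, yielding pumping lemmas that force the accepted input set to be a finite union of cubes — a counting set — and, crucially, forbid any genuine modular behaviour.

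I expect the general-protocol soundness to be the main obstacle, and specifically the step of showing the accepted input set is semilinear. The difficulty is that the reachability predecessor of a semilinear set need not be semilinear for an arbitrary vector addition system, so the closure properties of semilinear sets do not suffice on their own. Overcoming this requires exploiting the special structure imposed by the hypothesis that the protocol actually computes a \emph{function} — that from every reachable configuration a correct stable consensus must be reachable and no incorrect one is — to turn the quantification over all fair executions into a Presburger condition on the input. This is the genuinely hard part, whereas the immediate observation case stays tractable precisely because its copycat monotonicity makes the reachability analysis elementary and confines the answer to the modular-free world of counting sets.
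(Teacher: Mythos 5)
First, note that the paper does not prove this statement at all: Theorem~\ref{thm:rec-power} is imported verbatim from \cite{journals/dc/AngluinAER07} and used as a black box, so there is no in-paper proof to compare against. Judged on its own, your proposal correctly decomposes the theorem into four inclusions, and the two completeness directions are the standard ones and sound in outline (threshold and modulo protocols plus closure under Boolean combinations for general protocols; threshold detectors and products for the immediate observation case, with the correct observation that products of immediate observation protocols remain immediate observation). Your reduction of the general soundness direction to the semilinearity of $\mathcal{I}\cap\mathrm{pre}^*(S_b)$, where $S_b$ is the (downward-closed, hence counting, hence semilinear) set of stable $b$-consensuses, is also a correct first step.

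The genuine gap is that the hard direction is never actually proved. You correctly observe that $\mathrm{pre}^*$ of a semilinear set need not be semilinear for a general vector addition system, and then state that one must ``exploit the special structure imposed by the hypothesis that the protocol computes a function'' --- but you give no argument for how to do this, and this is precisely where the entire difficulty of the theorem lives. The known proof does not proceed by massaging $\mathrm{pre}^*$ at all; it establishes a pumping/truncation machinery (via Dickson's and Higman's lemmas) showing that the set of accepted inputs of a stably computing protocol is closed under certain pumping operations, and that any set with these closure properties is semilinear. Nothing in your sketch substitutes for that argument, so the first half of the theorem remains unproved. The same criticism applies, less severely, to the immediate observation upper bound: the ``copycat'' monotonicity you invoke is the right intuition, but you would still need to prove the concrete closure properties (e.g.\ that acceptance is invariant under adding or removing agents in a state once its multiplicity exceeds a fixed bound depending only on $|Q|$) that force the accepted set to be a finite union of cubes and exclude modular predicates; as written this is an appeal to intuition rather than a proof.
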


We now define constructive expressive power of protocols.

\begin{definition}
        A configuration $C$ \emph{satisfies an output condition}
        $\psi:\mathbb{N}^O\to\{true,false\}$
        if{}f $\psi(x\mapsto\sum_{o(q)=x}C(q))$ is $true$.
        In other words, we consider the multiset of the
outputs correspoding to the states of individual agents
then apply $\psi$ to this multiset.
        A configuration~$C$ \emph{ensures} an output condition $\psi$
        (given the protocol $P$)
        if every configuration $C'$ reachable from $C$ (including $C$)
        satisfies $\psi$.
        A protocol \emph{ensures} $\psi$ from configuration $C$,
        if every fair execution of $P$ starting from $C$
        reaches a configuration $C'$ that ensures $\psi$.
        A protocol \emph{ensures} $\psi$,
        if it ensures $\psi$ from every input configuration.

        An output condition $\psi$
        and size $n$
        are \emph{compatible} 
        if
        there exists a multiset $D\in\mathbb{N}^O$
of size $|D|=n$ satisfying $\psi$, i.e. $\psi(D)$ holds.
        An output condition $\psi$ 
        is \emph{size-flexible}
        if it is compatible with every (nonnegative integer) size.

        We interpret each multiset $\mathcal{D}\subset\mathbb{N}^O$
        as an output condition $\psi: D\mapsto{}D\in\mathcal{D}$.
\end{definition}

\begin{example}
        Both protocols from the example~\ref{ex:simple-protocols}
        ensure the condition $D\mapsto{}D(true)=0\vee{}D(false)=0$.
        This condition is ensured by any protocol computing a predicate.
\end{example}

\begin{remark}
        Only size-flexible conditions can be ensured.
\end{remark}

Note that the same a protocol that ensures a predicate $\psi$
also ensures every predicate $\psi'$ such that $\psi\Rightarrow\psi'$.
Therefore defining constructive expressive power of a class of protocols
requires an extra step.

\begin{definition}
        A class~$\mathcal{P}$ of population protocols
        \emph{ensures} a class $\Psi$ of output conditions
        if for each size-flexible $\psi\in\Psi$
        there is a protocol $P\in\mathcal{P}$
        that ensures $\psi$.

        A class~$\mathcal{P}$ \emph{ensures at most}
        a class $\Psi$ of output conditions
        if for each $\psi'$ ensured by some protocol $P\in\mathcal{P}$
        there is a size-flexible condition $\psi\in\Psi$ implying $\psi'$,
        i.e. $\psi\Rightarrow\psi'$.
\end{definition}

\begin{example}
        The class of protocols consisting of 
        a single protocol, namely the immediate observation protocol
        from the example~\ref{ex:simple-protocols}
        with output function satisfying
        $o(q_3)=large$ and $o(q_0)=o(q_1)=o(q2)=small$, 
        ensures the class of output conditions 
        with a single condition
        $D\mapsto(
        (D(large)=0\wedge{}D(small)\leq{}2)
        \vee
        (D(large)\geq{}3\wedge{}D(small)=0)
        )$,
        and at most that class.
\end{example}

\begin{remark}
        Note that for a given class~$\mathcal{P}$
        of protocols there is more than one
        class~$\Psi$
        of output conditions
        such that $\mathcal{P}$ ensures $\Psi$
        and at most $\Psi$.
        For example, adding a condition~$\psi'$
        that follows from some $\psi\in\Psi$
        to the class $\Psi$
        yields a class $\Psi'$
        such that $\mathcal{P}$ ensures $\Psi'$
        and at most $\Psi'$.
        We tolerate this and do not require minimality of $\Psi$
        because convenient classes of conditions,
        such as semilinear sets and counting sets,
        are not minimal.
\end{remark}

\section{Constructive power of population protocols}

In this section we establish the constructive expressive power
of general population protocols.

\subsection{Constructing semilinear predicates}

We start by providing a positive result,
showing that every predicate that can be computed,
can also be ensured.

\begin{theorem}
        \label{thm:gpp-con-lower}
        The class of all population protocols ensures 
        the class of semilinear output conditions.
\end{theorem}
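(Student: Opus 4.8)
The plan is to reduce the task to computing, for each input, a single target output multiset that lies in the prescribed semilinear set and has the right size, and then to freeze the population on that target. Write the size-flexible semilinear output condition as a finite union of linear sets $S=\bigcup_i L_i$ with $L_i=\{b_i+\sum_j c_{ij}p_{ij}:c_{ij}\in\mathbb{N}\}$. For a population of size $n$ the admissible output multisets of size $n$ are exactly those $b_i+\sum_j c_{ij}p_{ij}$ whose total mass equals $n$, and size-flexibility says that for every $n$ at least one such representation exists. Each piece contributes the numerical set $T_i=\{|b_i|+\sum_j c_{ij}|p_{ij}|\}$ of realisable sizes, every $T_i$ is eventually periodic, and $\bigcup_i T_i=\mathbb{N}$. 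Consequently there are a bound $N$ and a modulus $M$ such that for every residue class modulo $M$ and every $n\ge N$ a fixed piece $i$, a fixed \emph{adjusted base} $b_i'$ (the base $b_i$ plus a bounded number of periods used to correct the residue), and a single designated period $p^{*}$ suffice: one lays down $b_i'$ and then applies $p^{*}$ exactly $(n-|b_i'|)/|p^{*}|$ times to consume all remaining agents, landing on an element of $S$. The finitely many sizes below $N$ are handled by hard-coded bases.

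First I would build the control layer. The only global data the construction needs is the residue of $n=|C|$ modulo $M$ together with the bounded thresholds deciding whether $n\ge N$ and which small case applies. Since $n=\sum_{q\in I}C(q)$ is a linear functional of the input counts, each of these is a threshold or modulo predicate of the input and hence semilinear; by Theorem~\ref{thm:rec-power} a population protocol can compute all of them and reach a stable consensus on the resulting \emph{plan} (which piece $i$, which adjusted base $b_i'$, which period $p^{*}$). I would run this standard sub-protocol in parallel and elect a single leader by the usual rule in which two leaders meeting leave one leader behind; the surviving leader carries the plan and drives the output layer. Every non-leader agent holds an output token, initially unassigned; the leader first stamps $|b_i'|$ tokens to match $b_i'$ and then repeatedly converts a block of $|p^{*}|$ unassigned tokens into the output pattern prescribed by $p^{*}$. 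When the residue class is the correct one for $n$, this process exhausts the unassigned tokens exactly, the multiset of outputs equals a specific $D(n)\in S$, no transition then changes the output multiset, and the configuration is output-stable, hence ensures $S$.

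The main obstacle is that neither the predicate computation nor the leader election \emph{terminates}: no agent ever knows that the plan it currently sees is final, so the distribution may begin under a wrong plan and produce an output outside $S$. I would resolve this by making the output layer self-correcting: whenever the leader observes a plan different from the one it is currently enacting, it \emph{resets}, pulling all stamped tokens back into the unassigned pool and restarting the distribution under the new plan. Because the control layer stabilises along every fair execution (which is exactly what Theorem~\ref{thm:rec-power} provides), after finitely many steps the plan stops changing and equals the correct one for $n$, and from that point the reset mechanism drives the output to the intended $D(n)\in S$ and freezes it.

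Formally I would phrase correctness as: from every configuration reachable from a given input, a configuration that ensures $S$ is reachable. This holds because from anywhere one can let the election settle to a unique leader, let the correct plan propagate, and let the distribution run to completion, arriving at an output-stable configuration in $S$. The standard fairness argument for population protocols then upgrades this to the statement that every fair execution reaches a configuration that ensures $S$, which is exactly the definition of ensuring $S$. The delicate points to get right are that the reset is always enabled, so progress can never be blocked by a stale plan or by transiently multiple leaders, and that the residue bookkeeping guarantees the unassigned pool is emptied exactly, so the frozen output genuinely lies in $S$ rather than merely near it.
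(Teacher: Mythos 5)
Your overall architecture matches the paper's: decompose the semilinear set into linear pieces normalised so that a single period size absorbs the bulk of the population, use Theorem~\ref{thm:rec-power} to compute the size-compatibility data in parallel, build the target multiset with a leader-driven distribution gadget, and combine everything in a synchronous product whose output function switches on the recognisers' stable consensus. Your number-theoretic normalisation (bound $N$, modulus $M$, adjusted bases, one designated period $p^{*}$) is a valid variant of Lemma~\ref{lm:cone-divisibility}, and your correctness criterion --- from every reachable configuration some configuration that \emph{ensures} $S$ is reachable, then invoke fairness --- is sound, since configurations of a fixed size are finitely many.

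The genuine gap is in the distribution gadget, which is exactly where the paper's Lemma~\ref{lm:gpp-weak-con-cone} does its real work. With only pairwise interactions and finite state, your leader cannot account for tokens stamped by leaders that are later eliminated: when two leaders merge mid-block, the survivor's block counter no longer matches the global multiset of stamped tokens, and since your stamped tokens carry only output values the survivor cannot distinguish orphaned tokens from its own. Your reset is triggered by plan changes, but a reset executed one interaction at a time has no completion test --- the leader cannot know that no stamped token remains before it resumes stamping --- so after the last trigger the bookkeeping can be permanently inconsistent, no configuration with output in $S$ remains reachable, and the reachability-plus-fairness argument collapses. The paper's fix is to give each recruited agent a position index within its block (the states $q_{v,i\odot}$) and to maintain the invariant that the non-base agents can always be partitioned into well-formed partial blocks; disbanding then walks down the indices and automatically adopts orphans, because any agent at index $i$ serves equally well regardless of which leader recruited it. You need this, or an equivalent mechanism making the partial-block structure recoverable from the state multiset alone. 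A smaller point to check is that the final configuration \emph{ensures} $S$ in the paper's sense, i.e.\ no reachable successor leaves $S$: the reset must be provably disabled forever once the plan has stabilised and a single leader remains, not merely unlikely to fire.
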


The theorem follows from the following lemmas:

\begin{lemma}
        \label{lm:cone-divisibility}
        Each integer cone $\mathcal{C}$ can be represented
        as a finite union of cones
        such that all the periods of each cone
        have the same size.
\end{lemma}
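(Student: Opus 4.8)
The plan is to rescale every period to a common size and then recover the elements lost by this rescaling through finitely many translations of the base. First I would discard any period equal to the zero multiset: such a period has size $0$ and contributes nothing to the cone, so after removing them every remaining period $v_j$ has strictly positive size $s_j = |v_j|$. (If no periods remain, the cone is the single point $\{B\}$, which vacuously has all periods of the same size, so the claim is trivial.) I would then set $L = \mathrm{lcm}(s_1,\ldots,s_k)$ and define the rescaled periods $w_j = (L/s_j)\, v_j$, each of which has size exactly $|w_j| = L$.

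The core step is to split the coefficient of each original period by division with remainder. Writing an arbitrary element of $\mathcal{C}$ as $B + \sum_j c_j v_j$ with $c_j \in \mathbb{N}$, I would put $c_j = q_j (L/s_j) + r_j$ with $q_j \in \mathbb{N}$ and $0 \le r_j < L/s_j$, so that $\sum_j c_j v_j = \sum_j r_j v_j + \sum_j q_j w_j$. This shows the element lies in the cone with base $B + \sum_j r_j v_j$ and periods $w_1,\ldots,w_k$, and suggests taking, for each residue tuple $(r_1,\ldots,r_k)$ with $0 \le r_j < L/s_j$, the cone
\[
\mathcal{C}_{(r_1,\ldots,r_k)} = \Bigl\{\, B + \sum_j r_j v_j + \sum_j q_j w_j \;:\; q_j \in \mathbb{N} \,\Bigr\},
\]
whose periods $w_j$ are all of size $L$, and claiming $\mathcal{C} = \bigcup_{(r_1,\ldots,r_k)} \mathcal{C}_{(r_1,\ldots,r_k)}$.

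To finish I would verify the two inclusions. The computation above immediately gives $\mathcal{C} \subseteq \bigcup \mathcal{C}_{(r_1,\ldots,r_k)}$; conversely each $\mathcal{C}_{(r_1,\ldots,r_k)} \subseteq \mathcal{C}$, since its shifted base $B + \sum_j r_j v_j$ already lies in $\mathcal{C}$ and each $w_j$ is a nonnegative integer multiple of a period of $\mathcal{C}$. Finiteness of the union is clear, as there are $\prod_j (L/s_j)$ residue tuples. I do not expect a genuine obstacle here; the only points requiring care are the degenerate zero-size periods (which must be removed before the $\mathrm{lcm}$ is formed) and checking that the residue ranges $0 \le r_j < L/s_j$ are exactly what makes the quotients $q_j$ range over all of $\mathbb{N}$, so that both inclusions are tight.
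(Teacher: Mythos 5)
Your proof is correct and follows essentially the same route as the paper: form the least common multiple $L$ of the period sizes, rescale each period $v_j$ to $(L/|v_j|)v_j$, and take one cone per residue tuple $(r_1,\ldots,r_k)$ with shifted base $B+\sum_j r_j v_j$. Your explicit handling of zero-size periods and the division-with-remainder verification are minor elaborations of the paper's "it is easy to see" step.
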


\begin{proof}
        Let the cone $\mathcal{C}$ have base~$B$ 
        and periods~$v_j$ for $1\leq{j}\leq{n}$.
        Let $L$ be the least common multiple of $|v_j|$.
        Consider all the bases $B+\sum_{j} r_j v_j$
        where $0\leq{}r_j<\frac{L}{|v_j|}$.
        We consider all the cones $\mathcal{C}_{r_1,\ldots,r_n}$
        with such bases
        and periods $\frac{L}{|v_j|}v_j$.
        The size of all periods is $L$;
        it is easy to see that
        $\mathcal{C}=\bigcup{}\mathcal{C}_{r_1,\ldots,r_n}$.
\end{proof}

\begin{lemma}
        \label{lm:gpp-weak-con-cone}
        For each integer
        cone~$\mathcal{C}\subset\mathbb{N}^O$
        with all periods having the same size
        there is a protocol $P$
        such that 
the protocol
        $P$ ensures membership in $\mathcal{C}$
from each input configuration~$C$
        of size compatible with $\mathcal{C}$.
\end{lemma}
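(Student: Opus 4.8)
The plan is to realise one particular element of the cone, namely $B+kv_1$, where the compatible size is written $N=|B|+kL$ and $v_1$ is any one of the periods (all of which have the common size $L$ by hypothesis). I would have a single \emph{conductor} agent absorb the other agents one at a time and stamp each with an output label, following a fixed schedule. Concretely, fix once and for all an infinite word $\sigma=\sigma_0\sigma_1\cdots$ over $O$ whose first $|B|$ letters contain exactly $B(o)$ copies of each $o\in O$ (realising the base) and which afterwards repeats a length-$L$ block containing exactly $v_1(o)$ copies of each $o$ (realising one period). By construction, for every compatible size $N=|B|+kL$ the multiset $\{\sigma_0,\dots,\sigma_{N-1}\}$ equals $B+kv_1\in\mathcal{C}$; this is the only place the compatibility hypothesis is used, and incompatible sizes therefore need no attention.

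For the state space I would take: conductor states $L_t$ carrying a position $t$, with output $o(L_t)=\sigma_t$, where the sole input state is $L_0$ and $L_0$ doubles as the ``unabsorbed'' state; committed states $A_o$ for $o\in O$ with $o(A_o)=o$, which are inert except under a reset; and a reset state $R$. Only finitely many conductor states are needed, since a base-phase position ($t<|B|$) can be stored explicitly while a periodic-phase position need only be remembered modulo $L$ together with the fact that the base is finished. The productive transition lets a conductor absorb an unabsorbed agent, committing it with the current label and advancing: schematically $(L_s,L_0)\mapsto(L_{s+1},A_{\sigma_s})$, the case $s=0$ being the creation of the first conductor from two fresh agents. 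Thus a conductor at position $t$ always has exactly $t$ committed neighbours carrying $\sigma_0,\dots,\sigma_{t-1}$, and its own state contributes the letter $\sigma_t$.

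The delicate point, and the main obstacle, is to guarantee that a \emph{single} conductor eventually does all the absorbing: two conductors that have each committed agents cannot simply be merged, because their labellings are incompatible. I would resolve this by restart-on-conflict, letting two advanced conductors collide into reset signals, $(L_s,L_{s'})\mapsto(R,R)$ for $s,s'\geq 1$, where a reset signal travels and turns committed agents and stray conductors back into the unabsorbed state $L_0$ before dissolving among $L_0$'s. The purpose of this design is to control the \emph{dead} configurations: a configuration containing two advanced conductors, a conductor next to an unabsorbed agent, or a reset signal always enables a transition, so the intended dead configurations reachable from a compatible input are exactly those with one conductor at some position $t$ and $N-1$ committed agents. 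Counting forces $t=N-1$, so the output multiset is $\{\sigma_0,\dots,\sigma_{N-1}\}=B+kv_1\in\mathcal{C}$, and being dead such a configuration ensures membership in $\mathcal{C}$. Checking that the reset mechanism dissolves cleanly, leaving behind no spurious dead configuration (a stray $L_0$ together with leftover committed agents whose count no longer matches any conductor position), is precisely the part that requires care, and is where the real work of the proof lies.

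Finally I would use fairness to show that every fair execution actually reaches such a configuration. Let $C^{\ast}$ be the canonical dead configuration consisting of one conductor in state $L_{N-1}$ together with the committed multiset realising $\sigma_0\cdots\sigma_{N-2}$. From any reachable configuration one can, along a suitable schedule, first fire conflicts and resets so as to return all agents to $L_0$ and then run one conductor to completion, so $C^{\ast}$ is reachable from every reachable configuration. Hence no infinite fair execution can exist: by fairness $C^{\ast}$ would have to occur infinitely often, yet $C^{\ast}$ is dead. Every fair execution is therefore finite, a finite fair execution ends in a dead configuration, and by the previous paragraph that configuration ensures membership in $\mathcal{C}$, which is what the lemma demands.
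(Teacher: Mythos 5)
There is a genuine gap, and it sits exactly where you park it with ``this is where the real work of the proof lies'': the reset mechanism cannot be made to work in the form you describe, and the difficulty is not a matter of care but a structural flaw of the single-conductor design. In the periodic phase the conductor's true position $t$ is unbounded and is stored only modulo $L$, so after a conflict the conductor (and hence the reset signals it spawns) does not know how many agents it has committed. A reset signal $R$ must eventually disappear, since no dead configuration may contain one; but whatever finite-state rule removes $R$ (say $(R,L_0)\mapsto(L_0,L_0)$ or $(R,R)\mapsto(L_0,L_0)$), that transition is enabled regardless of whether committed agents remain, and the adversary may fire it first. This leaves orphaned agents in states $A_o$ that, being inert, can never be reclaimed: states are a multiset, so the new conductor cannot distinguish ``its own'' committed agents from leftovers, and any transition letting it touch an orphan would equally disturb its own count. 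The run then terminates in a reachable dead configuration with one conductor at position $t<N-1$ and a committed multiset equal to $\{\sigma_0,\dots,\sigma_{t-1}\}$ plus garbage, which is in general not in $\mathcal{C}$; the finite execution ending there is fair, so the protocol does not ensure membership. Your ``counting forces $t=N-1$'' step silently assumes every committed agent was committed by the surviving conductor, which is precisely what fails.

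The paper's construction avoids this by never letting any single agent accumulate an unbounded debt: the base is built by a chain of $|B|$ distinguished agents, and the rest are recruited into groups of size exactly $|v_1|$, each led by a leader whose state records the exact (bounded) number of members recruited so far. A conflict between two leaders destroys only one bounded group, and the losing leader disbands it by an exactly tracked countdown, releasing one member per step until it is itself back to the unassigned state $q_\bot$; nothing is ever orphaned. If you want to salvage your architecture, you essentially have to rediscover this: replace the global conductor by period-sized groups so that undoing a partial group is a finite, fully observable process.
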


\begin{proof}
        Let the cone~$\mathcal{C}$ have base~$B$ and periods~$v_j$.
        Let $B^{(1)},\ldots,B^{(|B|)}$ and $v_j^{(1)},\ldots,v_j^{(|v_j|)}$
        for each $1\leq{}j\leq{}n$
        be sequences of output values enumerating the corresponding multisets
        with correct multiplicities.

        Note that the sizes compatible with $\mathcal{C}$
        are sums of $|B|$ and a non-negative multiple of $|v_1|$
        as all the periods have the same length.
        In the special case of no periods,
        only size $|B|$ is compatible with $\mathcal{C}$.
        The idea of the construction is to select $|B|$ agents
        to output the base multiset,
        then split the remaining agents in groups of $|v_1|$ agents
        with the outputs forming the multiset $v_1$.
        The split is performed by allowing some agents
        to recruit unassigned agents to positions in groups of size $|v_1|$.
        If two such agents interact, one of them
        disbands the current group and becomes unassigned.
        Eventually we will have just one active agent having recruited
        every other agent in some group, and the last group is complete
        if{}f the total number of agents available is divisible by $|v_1|$.

        The protocol has the following states:
        $q_{B,j}$ for $1\leq{j}\leq{}|B|$,
        and 
        $q_{v,i\uparrow}$,
        $q_{v,i\downarrow}$,
        $q_{v,i\odot}$,
        for $1\leq{i}\leq{}|v_1|$.
        The state $q_{v,1\downarrow}$ 
        plays a special role in the protocol 
        and will also be denoted as $q_\bot$.
        If there are no periods,
        we formally define $q_{B,1}$ to be also called
        $q_{v,1\uparrow}$ and $q_{v,1\downarrow}=q_\bot$. 

        The only input state is $q_{B,1}$ if $|B|>0$
        and $q_{v,1\uparrow}$ otherwise.
        The transitions are as follows:
        \begin{itemize}
                \item $(q_{B,j},q_{B,j})\mapsto(q_{B,j},q_{B,j+1})$
                        for all $1\leq{}j<|B|$;
                \item $(q_{B,|B|},q_{B,|B|})\mapsto(q_{B,|B|},q_{v,1\uparrow})$;
                \item $(q_{v,i\uparrow},q_{v,j\uparrow})\mapsto
                        (q_{v,i\uparrow},q_{v,j\downarrow})$
                        for 
                        $1\leq{i}\leq{}|v_1|$
                        and 
                        $1\leq{}{j}\leq{}|v_1|$ 
\\ (for $j=1$
 we produce $q_{v,1\downarrow} = q_{\bot}$);
                \item $(
                        q_{v,i\uparrow},q_\bot
                        \mapsto
                        q_{v,(i+1)\uparrow},q_{v,i\odot}
                        )$
                        for 
                        $1\leq{i}<|v_1|$;
                \item $(
                        q_{v,|v_1|\uparrow},q_\bot
                        \mapsto
                        q_{v,1\uparrow},q_{v,|v_1|\odot}
                        )$.
                \item $(
                        q_{v,i\downarrow},q_{v,i\odot}
                        \mapsto
                        q_{v,(i-1)\downarrow},q_\bot
                        )$
                        for 
                        $2\leq{}{i}\leq{}|v_1|$
\\ (for $i=2$ the first agent also switches to the state
$q_{v,1\downarrow}=q_\bot$).
        \end{itemize}

        The output function
        yields $B^{(j)}$ for $q_{B,j}$
        and
        $v_1^{(j)}$ for 
        $q_{v,j\uparrow}$,
        $q_{v,j\downarrow}$,
        and
        $q_{v,j\odot}$.

        It is easy to see that if there are at least $|B|$ agents, 
        all $|B|$ states $q_{B,j}$ will eventually have exactly one agent,
        with the remaining agents switching to the state $q_{v,1\uparrow}$
        at some times during the execution.

        \emph{Claim.} At every moment
        all the agents not in the states $q_{B,j}$
        can be split into groups
        with one agent in each state $q_{v,j\odot}$ for $j$ from $1$ to some $k$,
        plus one agent in the state
        $q_{v,(k+1)\uparrow}$ or $q_{v,(k+1)\downarrow}$
        unless $k=|v_1|$.
        Some of such groups contain only one agent
        in the state
        $q_{v,1\uparrow}$
        or
        $q_{v,1\downarrow}=q_\bot$.

        Indeed, this is true initially as there are either no agents
except in the state $q_{B,1}$
        or each agent forms a group being in the state $q_{v,1\uparrow}$,
        and
        each transition preserves the desired property.

        In the following we will use the fairness condition
        to claim that if some property of configuration
        can always be destroyed
        by some transitions,
        it will eventually stop holding in any fair execution.

        Note that once there are only $|B|$ agents in the states
        $q_{B,j}$,
        the number of agents in the states
        $q_{v,j\uparrow}$
        cannot increase anymore,
        but will sometimes decrease
        until there is exactly one such agent.
        After that point, the sum of the indices of 
        all the agents in the states
        $q_{v,j\downarrow}$ with $j>1$
        will only decrease until it becomes zero.
        Afterwards the number of agents in the state $q_\bot$
        will only decrease until it reaches zero.

        At that moment,
        if the difference between the number of agents and $|B|$
        is divisible by $|v_1|$,
        all agents will be divided into 
        one group of $|B|$ agents with the multiset of outputs $B$
        and some groups of $|v_1|$ agents each
        having multisets of outputs equal to $v_1$.
        As all other sizes are not compatible with $\mathcal{C}$,
        this concludes the proof.
\end{proof}

\begin{lemma}
        \label{lm:semilinear-projection}
        The set of sizes compatible with an integer cone
        is a one-dimensional integer cone.
\end{lemma}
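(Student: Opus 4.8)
The plan is to observe that the set of sizes compatible with an integer cone $\mathcal{C}$ is nothing but the image of $\mathcal{C}$ under the size map $|\cdot|\colon\mathbb{N}^O\to\mathbb{N}$, and that this map is additive, so it transports the base-and-periods description of $\mathcal{C}$ directly onto a one-dimensional cone.

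First I would unfold the definition of compatibility. Recalling that the integer cone $\mathcal{C}$ is read as the output condition $D\mapsto D\in\mathcal{C}$, a size $n$ is compatible with it precisely when there is some $D\in\mathcal{C}$ with $|D|=n$. Thus the set of compatible sizes equals $\{\,|D| : D\in\mathcal{C}\,\}$, the image $|\mathcal{C}|$ of the cone under the size map.

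Next I would exploit that $|\cdot|$ is a monoid homomorphism from $(\mathbb{N}^O,+)$ to $(\mathbb{N},+)$: the size of a sum is the sum of sizes, and $|c\cdot v|=c\,|v|$ for every $c\in\mathbb{N}$. Writing $\mathcal{C}$ with base $B$ and periods $v_1,\ldots,v_k$, every $D\in\mathcal{C}$ has the form $D=B+\sum_j c_j v_j$ with $c_j\in\mathbb{N}$, whence $|D|=|B|+\sum_j c_j|v_j|$. Conversely, every natural number of this form is realised by the multiset $B+\sum_j c_j v_j$, which lies both in $\mathbb{N}^O$ and in $\mathcal{C}$, so its size is a compatible size. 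Combining the two inclusions gives $|\mathcal{C}|=\{\,|B|+\sum_j c_j|v_j| : c_j\in\mathbb{N}\,\}$, which is by definition the one-dimensional integer cone with base $|B|$ and periods $|v_1|,\ldots,|v_k|$.

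The argument carries no real obstacle; it is merely the one-dimensional instance of the general principle that the linear (additive) image of an integer cone is again an integer cone. The only points that genuinely need checking are that size is honestly additive on multisets and that nothing in the definition of a one-dimensional cone is violated when several of the period sizes $|v_j|$ coincide or when some of them vanish. Since repeated periods and zero periods are both permitted by the definition of an integer cone, the resulting subset of $\mathbb{N}$ is a legitimate one-dimensional integer cone, which concludes the proof.
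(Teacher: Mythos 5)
Your proof is correct and follows essentially the same route as the paper's: both identify the set of compatible sizes with the image of the cone under the additive size map and take $|B|$ as the base and the $|v_j|$ as the periods of the resulting one-dimensional cone, using the same coefficients in both directions. Your write-up is just a more detailed unfolding of the same argument.
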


\begin{proof}
        We identify the multisets with one-element domain with natural numbers.
        The base of the cone is the size $|B|$ of the base configuration,
        and the periods are the sizes  $|v_j|$ of the periods.
        We observe that using the same coefficients for
        non-negative integer combinations
        proves that the constructed one-dimensional cone
        contains exactly the sizes compatible with the original cone.
\end{proof}

\begin{proof}[Proof of the theorem~\ref{thm:gpp-con-lower}]
        Consider a size-flexible 
        semilinear set~$S\subset{}\mathbb{N}^O$.
        We construct our protocol as a synchronous product of multiple 
        sub-protocols.

        The set $S$
        can be represented
        as a union of cones
        $\bigcup_{j=1}^{n}\mathcal{C}_j \subset{}S$
        each having periods of the same size.
        We run synchronous product of $2n$ protocols, 
        $P_j^{con}$ ensuring membership in $\mathcal{C}_j$ for all compatible sizes
        (using the lemma~\ref{lm:gpp-weak-con-cone}),
        and
        $P_j^{rec}$ computing compatibility of configuration size with $\mathcal{C}_j$
        (this predicate is semilinear by the lemma~\ref{lm:semilinear-projection}
        thus it can be computed by the theorem~\ref{thm:rec-power}).
        The global output function is
        the output corresponding to construction
        of the first cone that is expected to be compatible with configuration size,
        $o((
        q_1^{con},q_2^{con},\ldots,q_n^{con},
        q_1^{rec},q_2^{rec},\ldots,q_n^{rec},
        )) = o^{con}_j (q_j^{con})$ where $j = \min{}k: o_k^{rec}(q_k^{rec})=true$.
        If there is no such cone, we return the first element of the output set.

        Eventually, all the protocols $P_j^{rec}$ 
        will converge to a stable consensus representing
        the true value of size compatibility.
        Therefore from some time on we will just use the output of 
        $P_j^{con}$ corresponding to the first size-compatible cone,
        which will be in $S$ from some point on by the lemma~\ref{lm:gpp-weak-con-cone}.
\end{proof}

\subsection{Upper bound on constructive expressive power of population protocols}

In this section we provide a matching upper bound for constructive
expressive power.

\begin{theorem}
        \label{thm:pp-con-only-semilinear}
        The class of all population protocols ensures at most
        the class of semilinear output conditions.
\end{theorem}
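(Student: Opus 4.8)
The plan is to show that if a population protocol $P$ ensures some output condition $\psi'$, then there is a semilinear condition $\psi$ that is size-flexible and implies $\psi'$. The natural candidate for $\psi$ is (roughly) the set of all output multisets that actually arise in configurations that the protocol stabilises into, restricted to be size-flexible. So first I would define, for the protocol $P$, the set
\[
\mathcal{G} = \{\, C \in \mathbb{N}^Q : C \text{ ensures } \psi' \,\},
\]
i.e.\ the set of configurations all of whose reachable successors satisfy $\psi'$. The key observation is that because $P$ ensures $\psi'$, every fair execution from every input configuration eventually enters $\mathcal{G}$ and stays there (since $\mathcal{G}$ is closed under reachability by construction). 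I would then let $\psi$ be the image of $\mathcal{G}$ under the output-multiset map $C \mapsto (x \mapsto \sum_{o(q)=x} C(q))$, and argue that $\psi \Rightarrow \psi'$ by definition (every configuration in $\mathcal{G}$ satisfies $\psi'$), and that $\psi$ is size-flexible because every size has \emph{some} input configuration, whose fair executions reach $\mathcal{G}$ without changing the size.

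\emph{The main obstacle} is proving that $\psi$ (the projection of $\mathcal{G}$ to output multisets) is \textbf{semilinear}. The tool I expect to use is that the set of configurations reachable in a population protocol, and sets defined by reachability, are semilinear / effectively Presburger-definable because the step relation is a Vector Addition System (the reachability relation of a VAS, and hence sets closed under it, are expressible in suitable decidable fragments; moreover population protocol executions are known to have semilinear reachability structure). More concretely, $\mathcal{G}$ is the complement of the set of configurations that can reach a configuration violating $\psi'$. If $\psi'$ itself is the condition ensured, I must be careful: the statement only asserts the existence of \emph{some} semilinear $\psi$ implying $\psi'$, not that $\psi'$ is semilinear. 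This gives freedom — I do not need $\mathcal{G}$ to be exactly semilinear, only to contain a size-flexible semilinear subset whose output projection still implies $\psi'$.

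So the refined plan is: first establish that $\mathcal{G}$ is closed upward under reachability and that it contains, for every size $n$, the terminal configurations reached by fair executions of size $n$. Second, invoke the semilinearity of the set of configurations reachable from the input configurations (a standard consequence of the VAS structure of population protocols, analogous to the mechanism used in Theorem~\ref{thm:rec-power}) to carve out a semilinear set $\mathcal{S} \subseteq \mathcal{G}$ that still meets every size. Third, apply the projection of multisets onto output multisets: since semilinear sets are closed under linear projections (the output map $C \mapsto \sum_{o(q)=x} C(q)$ is linear), the image $\psi$ is semilinear. Finally, verify size-flexibility of $\psi$: for each $n$, pick an input configuration of size $n$, follow a fair execution into $\mathcal{S}$, and the output multiset of the resulting configuration witnesses $\psi$ at size $n$. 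Since every element of $\mathcal{S}$ lies in $\mathcal{G}$ and hence satisfies $\psi'$, we get $\psi \Rightarrow \psi'$, completing the argument.

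\emph{The delicate point} I would flag is ensuring that the semilinear set I extract is genuinely a \emph{subset} of $\mathcal{G}$ rather than merely an over-approximation — an over-approximation would break the implication $\psi \Rightarrow \psi'$. The safe route is to take $\mathcal{S}$ to be the set of configurations reachable from input configurations that are \emph{stable} (all their successors also lie in $\mathcal{G}$); stability is itself a reachability-definable, hence semilinear, property, and stable configurations are exactly those guaranteed to be in $\mathcal{G}$. I expect the bulk of the technical work to be in making the semilinearity claim about reachability precise for this model and in handling the size-flexibility bookkeeping, while the structure of the reduction is straightforward.
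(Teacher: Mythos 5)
Your high-level reduction is sound and matches the paper's in spirit: find a size-flexible semilinear set of ``terminal'' configurations whose output projection implies $\psi'$. But the load-bearing step of your argument --- ``invoke the semilinearity of the set of configurations reachable from the input configurations (a standard consequence of the VAS structure)'' --- is false. Reachability sets of vector addition systems, and of population protocols in particular, are \emph{not} semilinear in general (this is a classical fact going back to Hopcroft and Pansiot). The paper only gets the weaker statement, via Leroux, that $\mathrm{post}^*(X)\cap Y$ is \emph{almost semilinear}: a finite union of asymptotic integer cones with possibly \emph{infinitely many} periods. Extracting from this a genuinely semilinear, yet still size-flexible, under-approximation is the actual content of the paper's proof: one truncates each cone to its first $K$ periods, where $K$ is chosen so that all sizes up to $M+L$ ($M$ the largest base size, $L$ the lcm of the first-period sizes) are covered, and then a divisibility argument pumps $v_{i,1}$ to reach every larger size. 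None of this appears in your proposal, and without it the step from ``reachability-definable'' to ``semilinear'' simply does not go through.

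A second, related gap: you anchor everything on $\mathcal{G}$, the set of configurations ensuring $\psi'$, and on ``stable'' configurations, claiming stability is ``reachability-definable, hence semilinear.'' Besides the issue above, $\psi'$ is an \emph{arbitrary} ensured condition --- the theorem does not assume it is semilinear --- so the set of configurations satisfying $\psi'$ need not be semilinear, and Leroux's machinery (which needs semilinear $X$ and $Y$) cannot even be applied to $\mathrm{pre}^*$ of its complement. The paper sidesteps this entirely by working with \emph{bottom configurations}: every fair execution reaches one, the set of all bottom configurations is semilinear by a result of Esparza et al.\ \emph{independently of $\psi'$}, and any reachable bottom configuration must satisfy any ensured condition. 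Replacing your $\mathcal{G}$ with the reachable bottom configurations is what makes the whole argument land on semilinear (well, almost semilinear) ground in the first place. You would need both of these ingredients --- the bottom-configuration device and the finite-period under-approximation with the divisibility argument --- to close the proof.
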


The proof uses the fact from~\cite{conf/birthday/Leroux12},
describing the structure of reachability sets of VAS,
a more general model than population protocols.

\begin{definition}
        An \emph{asymptotic integer cone}
        is a set of multisets
        defined
        by a base multiset (or just base) $B$
        and
        a possibly infinite set of period multisets (periods) $v_j$.
        We require that the domain
        of multisets $B$ and $v_j$ is finite;
        let its size be $n$.
        We require that the convex hull of the origin and all the periods
        interpreted as vectors in $\mathbb{Q}^n$
        is definable in $(\mathbb{Q},+,>)$.
        A multiset belongs to the cone if it can be represented 
        as a sum of the base multiset and
        a non-negative integer combination of periods.
        An \emph{almost semilinear} set is 
        a finite union of asymptotic integer cones.
\end{definition}

\begin{definition}
        The \emph{pre-image} of a set of configurations $X$
        is the set $\mathrm{pre}^*(X)$ such that 
        $C\in\mathrm{pre}^*(X)$ if{}f
        there is some $C'\in{}X$ reachable from $C$.
        The \emph{post-image} of $X$ is the set 
        $\mathrm{post}^*(X)$ such that
        $C\in\mathrm{post}^*(X)$ 
        if{}f it is reachable from some $C'\in{}X$.
\end{definition}

\begin{theorem}[\cite{conf/birthday/Leroux12}, restriction of Corollary 6.3]
        For any semilinear sets of configurations $X$ and $Y$,
        the sets 
        $\mathrm{post}^*(X)\cap{}Y$
        and
        $X\cap{}\mathrm{pre}^*(Y)$
        are almost semilinear.
\end{theorem}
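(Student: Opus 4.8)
The plan is to work inside the ambient vector addition system (VAS): a configuration is a vector in $\mathbb{N}^{d}$ (with $d=|Q|$ when the VAS comes from a population protocol), and a transition adds a fixed integer vector to the current configuration provided the result stays in $\mathbb{N}^{d}$. First I would eliminate one of the two claims by symmetry: reversing every transition (replacing its vector $v$ by $-v$) exchanges reachability directions, so $x\xrightarrow{*}y$ in the given VAS holds if{}f $y\xrightarrow{*}x$ in the reversed VAS; consequently $X\cap\mathrm{pre}^{*}(Y)$ in the given VAS equals $\mathrm{post}^{*}(Y)\cap X$ in the reversed VAS. Hence it suffices to prove, for every VAS and all semilinear $X,Y$, that $\mathrm{post}^{*}(X)\cap Y$ is almost semilinear. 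I would then reduce this to one global statement about the reachability relation $R=\{(x,y):x\xrightarrow{*}y\}\subseteq\mathbb{N}^{2d}$, using that $\mathrm{post}^{*}(X)\cap Y$ is the projection onto the $y$-component of $R\cap(X\times Y)$. Thus the target becomes: \emph{$R$ is almost semilinear}, together with closure of almost semilinear sets under intersection with semilinear sets and under projection.

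The core, and by far the deepest part, is the almost semilinearity of $R$; this is essentially the structural heart of the VAS reachability theorem, and a genuinely non-semilinear phenomenon occurs here (the Hopcroft--Pansiot example shows $\mathrm{post}^{*}(\{x\})$ need not be semilinear), so the asymptotic relaxation built into ``almost semilinear'' is unavoidable. The approach is a pumping-and-compactness analysis of runs. A run witnessing $(x,y)\in R$ is a sequence of transitions; I would first apply Dickson's lemma (the well-quasi-ordering of $\mathbb{N}^{d}$ along the configurations visited by the run) to show that, after extracting iterable cycles, only finitely many ``shapes'' of base runs occur, yielding finitely many base pairs $b$. The remaining freedom comes from cycles that can be traversed arbitrarily often once the configuration is large in the relevant coordinates; the effects of such pumpable cycles supply the period directions. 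Combining the finite set of base pairs with these periods would exhibit $R$ as a finite union of translated periodic sets.

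The main obstacle is to prove that the periods obtained this way generate a \emph{rational polyhedral} cone, i.e. a set definable in $(\mathbb{Q},+,>)$; this is exactly what upgrades ``finite union of translated periodic sets'' to ``almost semilinear''. The difficulty is that the nonnegativity constraint forces runs to stay in $\mathbb{N}^{d}$, so not every nonnegative combination of transition vectors is realizable as a cycle, and realizability depends on which coordinates are near the boundary. I would characterise an asymptotic direction $v\in\mathbb{Q}_{\geq 0}^{d}$ as one admitting a nonnegative rational combination of transition vectors equal to $v$ that can be realised by a cycle staying nonnegative at sufficiently large configurations; by Farkas' lemma and linear-programming duality this realizability condition becomes a finite conjunction of linear inequalities over $\mathbb{Q}$, so the achievable directions form a rational polyhedral cone. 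Making the interaction between pumping and the coordinate boundaries precise --- typically via an induction on the dimension (or rank) that peels off saturated coordinates --- is where the real work lies, and I expect this to be the hardest step.

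Finally I would assemble the pieces and discharge the two closure properties. For closure under intersection with a semilinear set, writing that set as a finite union of linear sets and observing that rational polyhedral cones are closed under intersection controls the asymptotic cones, while a further appeal to Dickson's lemma re-establishes finiteness of the base pairs that survive the intersection; for closure under projection I would use Fourier--Motzkin elimination on the cones together with the Presburger structure of the base data. Applying intersection to $R\cap(X\times Y)$ and then projecting onto the $y$-component, and taking the finite union over the linear pieces of $X$ and $Y$, yields $\mathrm{post}^{*}(X)\cap Y$ as a finite union of asymptotic integer cones, which completes the argument.
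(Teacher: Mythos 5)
This statement is not proved in the paper at all: it is imported as a black box from Leroux's work \cite{conf/birthday/Leroux12} (a restriction of his Corollary~6.3), so your proposal has to be measured against Leroux's actual proof of the VAS reachability structure theorem. Your opening reductions are sound and do mirror how the result is organised there: reversing transition vectors swaps $\mathrm{pre}^*$ and $\mathrm{post}^*$, and the statement does follow from almost semilinearity of the reachability relation $R$ together with suitable closure properties. But everything after that is an outline of a research programme, not a proof. The claim that $R$ is almost semilinear \emph{is} the structural form of the VAS reachability theorem, and the argument you sketch for it --- Dickson's lemma to extract finitely many base runs, pumpable cycles as periods, Farkas' lemma to make the cone of directions polyhedral --- is essentially a Karp--Miller-style pumping analysis, which predates Leroux's theorem by decades and provably does not suffice. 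The concrete failure point: whether a cycle can be iterated depends on how many times \emph{other} cycles have been iterated (coordinates near the boundary are replenished only by interleaving), so you do not obtain finitely many base pairs each equipped with an independently iterable set of periods. Handling this interdependence is exactly why KLMST-style proofs need a nested decomposition with a well-founded rank, and why Leroux's proof needs his machinery of transformer relations and Presburger inductive invariants. Likewise, Farkas/LP duality characterises the effects of \emph{formal} nonnegative combinations of transition vectors, not of combinations realisable as runs staying in $\mathbb{N}^d$; bridging that gap is the theorem itself, not a lemma one can appeal to.

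A second, smaller gap: you invoke closure of almost semilinear sets under intersection with semilinear sets and under projection as if routine. Projection is indeed unproblematic (a periodic set projects to a periodic set, and definability of the conic hull in $(\mathbb{Q},+,>)$ survives quantifier elimination), but intersection is not: the intersection of a periodic set with a linear set need not be periodic, and re-expressing it as a finite union of translated, asymptotically definable periodic sets requires dedicated lemmas, which Leroux proves and you omit. In sum, your proposal correctly identifies the architecture of the derivation, but both the core theorem and one of the two closure properties it rests on are left unproven, and the sketched route to the core theorem would fail as stated.
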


We also use the results on structure of mutual reachability 
from \cite{journals/acta/EsparzaGLM17}.

\begin{definition}
        A configuration $C$ is a \emph{bottom configuration}
        if for each configuration $C'$ reachable from $C$,
        the configuration $C$ is reachable from $C'$.
\end{definition}

\begin{theorem}[\cite{journals/acta/EsparzaGLM17},
        lemma 3 and proposition 14]

        Each fair execution of a population protocol reaches a bottom configuration.
        The set of bottom configurations is semilinear.
\end{theorem}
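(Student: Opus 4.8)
The two assertions can be proved separately, and I would begin with the easier one.

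\textbf{Reaching a bottom configuration.} The plan is to use that every configuration of an execution has the same size $n$ and that there are only finitely many configurations of size $n$. A finite fair execution is maximal, so its last configuration $C'$ admits no transition at all; then $\mathrm{post}^*(\{C'\})=\{C'\}$ and $C'$ is trivially a bottom configuration. For an infinite fair execution $C_0,C_1,\ldots$ I would let $B$ be the set of configurations occurring infinitely often. Since the size-$n$ state space is finite, $B$ is nonempty and there is a time $T$ after which every $C_j$ lies in $B$. First I would observe that the elements of $B$ are pairwise mutually reachable: for $C,C'\in B$, an occurrence of $C'$ after $T$ followed by a later occurrence of $C$ witnesses that $C$ is reachable from $C'$, and symmetrically. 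Next I would use fairness to show that $B$ is closed under reachability: if $C\in B$, $D$ is reachable from $C$, and $D\notin B$, then $D$ occurs only finitely often, so fairness provides an index $j$ from which $D$ is no longer reachable; but $C\in B$ equals $C_m$ for some $m>j$, and $D$ is reachable from $C_m$, a contradiction. Combining the two observations, every configuration reachable from a $C\in B$ lies in $B$ and is therefore reachable back to $C$, so each $C\in B$ is a bottom configuration and the execution reaches one, e.g.\ $C_T$.

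\textbf{Semilinearity.} Here I expect the real difficulty. The first step would be to remove the quantifier over all reachable configurations in favour of a condition on single transitions: since mutual reachability is transitive, $C$ is a bottom configuration if{}f every single step taken inside $\mathrm{post}^*(\{C\})$ is reversible, i.e.\ for every $D$ reachable from $C$ and every transition $t$ that $D$ can perform, $D$ is reachable from the resulting configuration. Indeed an irreversible step would produce a witness of non-bottomness, and conversely reversibility of each step lets one retrace any path from $C$ back to $C$. Writing $\delta_t=\Lbag q'_1,q'_2\Rbag-\Lbag q_1,q_2\Rbag$ for the displacement of a transition $t=(q_1,q_2)\mapsto(q'_1,q'_2)$, this lets me describe the non-bottom configurations as
\[
\bigcup_t \mathrm{pre}^*(E_t),\qquad E_t=\{D : D\text{ can perform }t\text{ and }D\notin\mathrm{post}^*(\{D+\delta_t\})\},
\]
so the bottom configurations are the complement of this set.

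It then remains to prove semilinearity of this set, and this is the step I expect to be the main obstacle. I would appeal to the structural reachability results already available: the conservation of population size makes each size level a finite graph, so the ``return after displacement $\delta_t$'' relation defining $E_t$ is, at each fixed size, decided in a finite system, while the results of~\cite{conf/birthday/Leroux12} on the almost semilinear shape of $\mathrm{pre}^*$ and $\mathrm{post}^*$ of semilinear sets, together with a well-quasi-ordering argument bounding the number of distinct reachability patterns, would supply a uniform periodic description across sizes and upgrade ``almost semilinear'' to genuinely semilinear in the conservative setting. The crux is showing that the parametric condition $D\notin\mathrm{post}^*(\{D+\delta_t\})$ already carves out a semilinear $E_t$: this concerns the fibres of the VAS reachability relation, which is not semilinear in general and becomes tractable only through the special conservative structure of population protocols combined with the cited reachability theory.
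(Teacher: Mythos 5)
A framing remark first: the paper you are being compared against does not actually prove this statement --- it imports it by citation (lemma 3 and proposition 14 of the referenced work), so your attempt has to be judged against the proof in that cited source.

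Your first half is correct and complete. Conservation of size makes each size level finite; a finite fair execution ends in a configuration with no enabled transition, which is trivially bottom; and for an infinite fair execution the set of configurations occurring infinitely often is pairwise mutually reachable and, by the fairness clause, closed under reachability, so all its elements are bottom configurations. This is the standard argument.

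The semilinearity half has a genuine gap, exactly where you flagged it. Your characterisation of bottomness via reversibility of single steps inside $\mathrm{post}^*(C)$ is correct, and so is the identity expressing the non-bottom configurations as $\bigcup_t \mathrm{pre}^*(E_t)$. But the tools you invoke cannot close the argument: $\mathrm{pre}^*$ of a semilinear set is not semilinear for population protocols in general --- this is precisely why the present paper has to work with Leroux's \emph{almost} semilinear sets in its own upper-bound proof --- and there is no known ``WQO upgrade'' of almost semilinear to semilinear in the conservative setting; conservativity gives finiteness of each size level (hence level-by-level decidability) but no uniform periodicity across levels. The missing ingredient is a different theorem of Leroux, the reversible reachability theorem: the \emph{mutual} reachability relation $\sim$ of a VAS is effectively semilinear. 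Granting it, your set $E_t$ is indeed semilinear, since when $t$ is enabled at $D$ one has $D \rightarrow^* D + \delta_t$ for free, so $D \notin \mathrm{post}^*(\{D+\delta_t\})$ is equivalent to $\lnot(D \sim D + \delta_t)$. But you must also eliminate the outer $\mathrm{pre}^*$, and the way to do that is to strengthen your step-reversibility criterion from $\mathrm{post}^*(C)$ to the $\sim$-class of $C$: a configuration $C$ is bottom if{}f for every $C'$ with $C \sim C'$ and every transition $t$ enabled at $C'$ we have $C' \sim C' + \delta_t$. (One direction: every member of the $\sim$-class of a bottom configuration is itself bottom, so its single steps are reversible. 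Other direction: induction along any path from $C$ shows every configuration reached remains $\sim$-equivalent to $C$.) This turns ``bottom'' into a first-order Presburger formula over the semilinear relation $\sim$, and Presburger-definable sets are exactly the semilinear ones. That reformulation --- replacing $\mathrm{post}^*$ and $\mathrm{pre}^*$ by the semilinear relation $\sim$, rather than trying to force semilinearity onto reachability itself --- is the step your proposal is missing, and it is how the cited proof works.
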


\begin{proof}[Proof of the theorem~\ref{thm:pp-con-only-semilinear}]
        Each fair execution reaches a reachable bottom configuration
        and then reaches it infinitely many times.
        Thus any output condition ensured by the protocol
        is satisfied by the output corresponding 
        to any bottom configuration
        that is reachable from some input configuration.
        It suffices to find a size-flexible semilinear set 
        of reachable bottom configurations~$\mathcal{B}$,
        as its image under the output function will also be
        size-flexible and semilinear.

        The proof idea is to consider under-approximations 
        of the set of reachable bottom configuration
        using finite subsets of of periods.
        Observe that compatibility with any specific size
        can be demonstrated using just a finite number of periods;
        a simple divisibility argument shows
        that covering a finite set of sizes is sufficient.

        We know that the set of bottom configuration is semilinear,
        and therefore the set of bottom configuration reachable from
        input configurations is almost semilinear.
        Let $B_1,\ldots,B_s$ be the bases of corresponding asymptotic
        integer cones.
        Fix some enumeration $(v_{i,j})$ of the periods of these cones,
        where $v_{i,j}$ is the $j$-th period of the $i$-th cone.
        Let $M=\mathrm{max}_j{}|B_j|$ be the maximum size of a base.
        Let $L$ be the least common multiple of $|v_{i,1}|$
        for all $i$ from $1$ to $s$ corresponding to the cones
        with at least one period.

        Note that compatibility with any given size 
        can be demonstrated using a finite number
        of periods.
        Let $K$ be the maximal number of a period used 
        to demonstrate compatibility with any size
        up to $M+L$.
        We show that the semilinear set $\hat{\mathcal{B}}$
        consisting of the integer cones 
        with bases $B_i$
        and periods $v_{i,j}$ for $j\leq{}K$
        is size-flexible.
        Indeed, consider any size $S>M+L$.
        Let $r$ be the remainder of $S-M-1$ modulo $L$.
        Consider the size $M+1+r>M$.
        As this size is strictly larger than all the base sizes,
        its compatibility with $\mathcal{B}$
        has to be shown using an integer cone
        with  base $B_i$ and a nonempty set of periods $\{v_{i,j}\}$.
        Moreover, it can be demonstrated using only the periods
        $v_{i,1},\ldots,v_{i,K}$,
	as $r\leq{}L-1$ and thus $M+1+r\leq{}M+L$.
        We have $C=B_i+\sum_{j=1}^{K}a_j{}v_{i,j}\in\hat{\mathcal{B}}$,
        $|C|=M+1+r$.
        As $L$ is divisible by $|v_{i,1}|$,
        we can add the period $v_{i,1}$ to the configuration $C$
        exactly $\frac{S-|C|}{|v_{i,1}|}$ times:
        $C'=C+\frac{S-|C|}{|v_{i,1}|}v_{i,1}\in\hat{\mathcal{B}}$
        and $|C'|=S$.
        This concludes the proof.
\end{proof}

\section{Constructive expressive power of immediate observation protocols}

In this section we switch to the study of constructive expressive power
for a subclass of population protocols.
namely immediate observation population protocols.

From the point of view of computing predicates, they have lower 
but still significant expressive power, 
but benefit from a much lower verification computational complexity
than the general protocols, namely $\mathbf{PSPACE}$-complete.
We show that these properties also hold in the context
of ensuring protocols.

\subsection{Constructing counting sets}

Just like in the case of general population protocols,
we start by providing the feasibility result.

\begin{theorem}
        \label{thm:io-con-lower}
        The class of immediate observation population protocols ensures 
        the class of counting output conditions.
\end{theorem}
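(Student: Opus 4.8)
The plan is to reuse the architecture of the proof of Theorem~\ref{thm:gpp-con-lower}, specialised so that every ingredient stays immediate observation. First I would fix a single input state, so that the only input configuration of size $n$ consists of $n$ agents in that state; it then suffices to produce, for every $n$, a stably maintained output distribution lying in the target set. I would write the size-flexible counting set as a finite union of cubes $\prod_{x\in O}[l_x,u_x]$, and attach to each cube two sub-protocols run in synchronous product: a \emph{recogniser} and a \emph{constructor}. The set of sizes compatible with a cube is the interval $[\sum_x l_x,\sum_x u_x]$ (with infinite upper end when some $u_x=\infty$), a one-dimensional counting set, so by Theorem~\ref{thm:rec-power} an immediate observation recogniser can compute, as a stable consensus, whether the current size is compatible with the cube. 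As in Theorem~\ref{thm:gpp-con-lower} the global output is the distribution produced by the constructor of the first cube whose recogniser reports compatibility, and size-flexibility guarantees that for every $n$ at least one cube is compatible.

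The heart of the argument is the constructor, which must force, using only one-way transitions, a prescribed number of agents into each output value. Since a unique agent can be singled out by the immediate observation rule ``an agent observing an agent in the same role drops out'', I would iterate this into a \emph{levelled election}: states $c_1,\dots,c_N$ and a sink $s$, with transitions $c_j\xrightarrow{c_j}c_{j+1}$ for $j<N$ and $c_N\xrightarrow{c_N}s$, all agents starting in $c_1$. The key lemma is that every fair execution reaches the unique frozen configuration with exactly one agent in each of $c_1,\dots,c_{\min(n,N)}$ and the remaining $n-N$ agents (if any) in $s$. This follows from termination (the sum of all levels strictly increases at every step, and is bounded) together with the invariant that the occupied levels always form an initial segment $\{1,\dots,m\}$: writing $I_j$ for the number of agents at level at least $j$, every transition only increases some $I_{j+1}$, and one checks that a promotion at level $j$ preserves the property ``$I_{j+1}>0$ implies level $j$ is occupied'', which rules out gaps.

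It remains to label the levels so that the frozen configuration lands in the cube. For a cube with an unbounded coordinate $x_\ast$ I would take $N=\sum_x l_x$, give the levels the lower-bound multiset (so the level outputs realise exactly $l_x$ copies of each $x$) and let the sink output $x_\ast$; compatibility $n\geq\sum_x l_x$ then makes the count of $x_\ast$ at least $l_{x_\ast}$ while every other coordinate equals its lower bound, so the output lies in the cube. For a bounded cube every compatible size satisfies $n\leq\sum_x u_x$, so I would take $N=\sum_x u_x$, place the lower-bound multiset on the first $\sum_x l_x$ levels and the remaining $u_x-l_x$ copies of each $x$ afterwards; then for every compatible $n$ the length-$n$ prefix of level outputs has each coordinate between $l_x$ and $u_x$, which is exactly the frozen output for $n$ agents.

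Finally I would conclude with the bottom-configuration viewpoint used later in the excerpt. A fair execution eventually stays inside a bottom strongly connected set of the reachability graph; there the constructors, whose transitions are irreversible, must already be frozen, and the recognisers sit at their stable consensus, so the global output is permanently a point of the first size-compatible cube, hence of the target counting set. The step I expect to be the main obstacle is the correctness of the levelled election, that is, establishing the no-gaps invariant and, slightly more delicately, checking that fairness of the synchronous product really forces every component to converge; this is precisely why I would argue through the bottom strongly connected set rather than trying to schedule the individual components by hand.
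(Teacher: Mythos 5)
Your proof is correct and follows essentially the same route as the paper: an irreversible observation chain $q_j\xrightarrow{q_j}q_{j+1}$ that freezes into a prefix of prescribed output labels (your ``levelled election'' is exactly the paper's constructor in Lemmas~\ref{lm:io-weak-con-point} and~\ref{lm:io-weak-con-ray}), run in synchronous product with immediate observation size-recognisers, the global output being taken from the first size-compatible component. The only difference is the decomposition --- the paper uses one point-constructor per size below $|D|$ plus a single ray-constructor for all larger sizes, whereas you use one constructor per cube with interval-compatibility recognisers --- but both are sound, and your bottom-configuration argument for convergence of the product is at least as careful as the paper's.
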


The proof is similar to the proof of the theorem~\ref{thm:gpp-con-lower}.

\begin{lemma}
        \label{lm:io-weak-con-point}
        For every multiset $D\in\mathbb{N}^O$
        there is an immediate observation protocol
        with a single input state
        that ensures equality to $D$ from 
        the input configuration of size $|D|$.
\end{lemma}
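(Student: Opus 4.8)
The plan is to realise $D$ as the unique terminal configuration of a simple ``counting line'' immediate observation protocol, so that the required stable configuration is reached automatically. This is the immediate observation analogue of the construction in Lemma~\ref{lm:gpp-weak-con-cone}, but considerably simpler, because we only need to hit a single multiset rather than an entire cone.

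Let $n=|D|$ and fix an enumeration $D^{(1)},\ldots,D^{(n)}$ of the multiset $D$ with correct multiplicities. I would take states $q_1,\ldots,q_n$, declare $q_1$ the unique input state, set the output function to $o(q_k)=D^{(k)}$, and use the single family of transitions $q_i\xrightarrow{q_i}q_{i+1}$ for $1\le i<n$: an agent in state $q_i$ that observes another agent in state $q_i$ advances to $q_{i+1}$, while the observed agent is unchanged, so the protocol is immediate observation. The input configuration of size $n$ places all $n$ agents in $q_1$, and the intended target is the configuration with exactly one agent in each state $q_k$, whose output multiset is precisely $D$.

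Two facts drive the argument. First, termination: the potential $\Phi(C)=\sum_i i\cdot C(q_i)$ strictly increases at every transition and is bounded above by $n^2$, so every execution is finite and every fair execution ends at a dead configuration. A configuration is dead exactly when $C(q_i)\le 1$ for every $i<n$, since the transition out of $q_i$ needs two agents in $q_i$. Second, and this is the crux, I would show that agents cannot pile up in the last state: writing $N_k=\sum_{i\ge k}C(q_i)$ for the number of agents in states of index at least $k$, I claim every reachable configuration satisfies $N_k\le n-k+1$. This holds for the input configuration, and a transition advancing an agent from $q_{k-1}$ to $q_k$ requires $C(q_{k-1})=N_{k-1}-N_k\ge 2$; combined with the inductive bound $N_{k-1}\le n-k+2$ this forces $N_k\le n-k$ before the step, so $N_k$ remains $\le n-k+1$ afterwards. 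In particular $C(q_n)=N_n\le 1$ always.

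Combining the two facts finishes the proof: at a dead configuration we have $C(q_i)\le1$ for $i<n$ and also $C(q_n)\le1$, and since the total number of agents is $n$ this forces $C(q_i)=1$ for all $i$. Hence every fair execution from the input configuration of size $|D|$ reaches the configuration with one agent per state; that configuration is dead, so the only configuration reachable from it is itself, whose output multiset is $D$, and therefore it ensures equality to $D$. I expect the main obstacle to be precisely the invariant $N_k\le n-k+1$: it is what rules out spurious terminal configurations with several agents stacked in $q_n$ (or elsewhere), and without it the naive line protocol could in principle stabilise on an output other than $D$.
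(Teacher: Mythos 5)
Your proposal is correct and uses exactly the construction in the paper (the line protocol $q_j\xrightarrow{q_j}q_{j+1}$ with one input state and output $o(q_j)=D^{(j)}$); the paper simply asserts that ``it is easy to see'' that all agents end in distinct states, whereas you supply the termination potential and the invariant $N_k\le n-k+1$ that justify it. No gaps.
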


\begin{proof}
        Let an enumeration $D^{(j)}$ for $j$ from $1$ to $|D|$ 
        contain each element $x\in{}O$ exactly $D(x)$ times.
        The protocol has the input state $q_1$
        and other states $q_2,\ldots,q_{|D|}$.
        The transitions are $q_{j}\xrightarrow{q_j}q_{j+1}$,
        and the output function is $o(q_j)=D^{(j)}$.
        It is easy to see that in a fair execution 
        all $|D|$ agents will have different states
        and therefore produce output $D$.
\end{proof}

\begin{lemma}
        \label{lm:io-weak-con-ray}
        For a multiset $D\in\mathbb{N}^O$
        and output value $x\in{}O$
        consider
        the cube with the lower bounds
        specified by $D$
        and the upper bounds specified by $D$ 
        except for infinite upper bound for the value $x$.
        Then
        there is an immediate observation protocol
        with a single input state
        that ensures membership in 
        that cube
        from
        each input configuration of size at least $|D|$.
\end{lemma}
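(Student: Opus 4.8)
The plan is to build, exactly as in Lemma~\ref{lm:io-weak-con-point}, a linear chain of states down which the agents cascade, but to cap the chain with an absorbing \emph{sink} state whose output is $x$, so that every agent beyond the first $|D|$ piles up there. Write $k = |D| - D(x) = \sum_{y\neq x} D(y)$ for the number of agents that must carry a non-$x$ output, and let $E^{(1)},\dots,E^{(k)}$ enumerate the values $y\neq x$ with multiplicity $D(y)$. I would take states $q_1,\dots,q_k,q_{k+1}$ with single input state $q_1$, transitions $q_j \xrightarrow{q_j} q_{j+1}$ for $1\le j\le k$, and output function $o(q_j)=E^{(j)}$ for $j\le k$ together with $o(q_{k+1})=x$. (When $k=0$, i.e. $D$ consists only of copies of $x$, the protocol degenerates to the single sink state $q_1$ with no transitions.) Each transition leaves the observed agent unchanged, so this is an immediate observation protocol.

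Next I would establish two monotonicity facts about configurations reachable from the all-$q_1$ start of size $n\ge |D|$. First, the weighted index $\sum_j j\,C(q_j)$ strictly increases by $1$ at each step and is bounded above by $n(k+1)$, so \emph{every} execution is finite and hence every fair execution ends in a terminal configuration. Second, each chain state is monotone once inhabited: the only way $q_j$ loses an agent is via $q_j\xrightarrow{q_j}q_{j+1}$, which requires $C(q_j)\ge 2$ and leaves $C(q_j)\ge 1$; consequently once $q_j$ is inhabited it stays inhabited, and since any agent in $q_{j+1}$ must have arrived from $q_j$, in every reachable configuration $C(q_{j+1})>0$ forces $C(q_j)>0$.

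It then remains to read off the terminal configurations. In a terminal configuration no transition fires, so each chain state holds at most one agent; I claim each holds exactly one and the sink holds the remaining $n-k\ge D(x)$ agents. The main obstacle is precisely to exclude a terminal configuration with a \emph{gap}, i.e. an empty chain state while later states are populated, which would violate the exact lower bounds $D(y)$ on the non-$x$ outputs. This is killed by the second monotonicity fact: if $j$ is the least index with $C(q_j)=0$, then all states downstream of $q_j$ (including the sink) are empty, so the whole population sits in $q_1,\dots,q_{j-1}$, giving $n\le j-1\le k-1<|D|\le n$, a contradiction. Hence every chain state carries exactly one agent, the non-$x$ outputs appear with multiplicities exactly $D(y)$, and the sink contributes $n-k\ge D(x)$ copies of $x$, so the terminal (hence dead-end) configuration lies in the cube and trivially ensures membership in it. Since every fair execution reaches such a configuration, the protocol ensures membership in the cube from every input configuration of size at least $|D|$.
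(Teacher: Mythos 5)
Your proposal is correct and follows essentially the same approach as the paper: a linear chain of states down which agents cascade via $q_j\xrightarrow{q_j}q_{j+1}$, capped by a sink state with output $x$ that absorbs the surplus agents. The only difference is that you shorten the chain to the $k=|D|-D(x)$ non-$x$ positions and let the sink supply all copies of $x$ (the paper keeps a chain of full length $|D|$), and you spell out the termination and no-gap invariants that the paper leaves as ``easy to see''; both variants are fine.
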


\begin{proof}
        Again,
        let an enumeration $D^{(j)}$ for $j$ from $1$ to $|D|$ 
        contain each element $x\in{}O$ exactly $D(x)$ times.
        The protocol has the input state $q_1$
        and other states $q_2,\ldots,q_{|D|},q_{|D|+1}$.
        The transitions are $q_{j}\xrightarrow{q_j}q_{j+1}$,
        and the output function is $o(q_j)=D^{(j)}$,
        $o(q_{|D|+1})=x$.
        It is easy to see that in a fair execution 
        with at least
        $|D|$ agents,
        states $q_1,\ldots,q_{|D|}$
        will contain one agent each
        with the rest of the agents in the state $q_{|D|+1}$.
        Such a configuration will produce the output
        differing from $D$ only by increasing the multiplicity of $x$,
        as required.
\end{proof}

\begin{proof}[Proof of the theorem~\ref{thm:io-con-lower}]
        Consider a size-flexible counting constraint $\psi$.
        It has to contain a cube with at least one infinite upper bound.
        Consider the smallest multiset $D$ in that cube,
        and the output value $x$ having an infinite upper bound.
        Let $P^{con}_\infty$ be the protocol corresponding to $D$ and $x$
        by the lemma~\ref{lm:io-weak-con-ray}.
        Let $P^{con}_j$ for $0\leq{}j<|D|$ be the protocol
        constructed by the lemma~\ref{lm:io-weak-con-point}
        for some multiset of size $j$ satisfying the constraint $\psi$.

        By the theorem~\ref{thm:rec-power}
        there are immediate observation population protocols 
        $P^{rec}_j$ recognising equality of input size to $j$ respectively.

        We consider the synchronous of all these protocols
        and define the output function to be the output of $P_j^{con}$
        for the minimal $j$ such that $P_j^{rec}$ outputs $true$,
        or the output of $P_\infty^{con}$ if none does.
        Eventually, all the protocols $P_j^{rec}$
        provide correct configuration size information to each agent,
        and thus the outputs of the same $P_j^{con}$ or $P_\infty^{con}$
        protocol are used by all the agents.
        By construction, the multiset of these outputs 
        satisfies $\psi$ from some moment on.
\end{proof}

\subsection{Structure of bottom configurations of immediate observation protocols}

In this section we prove a structural lemma about the structure
of bottom configurations of immediate observation protocols,
which also implies an upper bound on constructive expressive power.

\begin{theorem}
        \label{thm:io-bottom}
        The set of bottom configurations
        of an immediate observation population protocol
        is a counting set.
\end{theorem}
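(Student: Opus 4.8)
The plan is to exploit the defining feature of counting sets: a set $X\subseteq\mathbb{N}^Q$ is a counting set if and only if there is a single threshold $t\in\mathbb{N}$ such that membership of $C$ in $X$ depends only on the capped profile $\widehat{C}$ given by $\widehat{C}(q)=\min(C(q),t)$. Each cube is of this form (take $t$ larger than every finite bound), finite unions preserve it, and conversely a threshold-determined set is the finite union of the cubes selecting its finitely many accepting profiles. So it suffices to produce a uniform threshold $t$, depending only on the protocol, such that whether $C$ is a bottom configuration is unchanged when we alter the counts of states while keeping both the support and every count capped at $t$ fixed. I would take $t$ large compared to $|Q|$, with the exact bound falling out of the copycat arguments below.

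First I would develop the monotonicity tools specific to immediate observation. Because a transition $q_1\xrightarrow{q_2}q_1'$ leaves the observed agent untouched, extra agents sitting in an already inhabited state can only enable behaviour: any agent's trajectory can be copied by a fresh agent started in the same state, and no observation is ever destroyed by adding agents. From this \emph{copycat} property I would derive two support-level facts. The set of supports reachable from $C$ depends only on $\mathrm{supp}\,C$; indeed, whether an agent can travel from $q$ to $q'$ is governed purely by which states stay inhabited along the way, so I would define $\mathrm{cl}(S)$ to be the single-agent reachable closure when all states of $S$ remain available. Second, the mutual-reachability relation among \emph{sufficiently populated} configurations is likewise determined by the support: two large configurations with the same support and the same size are mutually reachable precisely when the single-agent move graph inside that support is strongly connected, since agents can then be shuffled one at a time into any prescribed distribution.

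These facts suggest the characterisation I would prove: $C$ is a bottom configuration if and only if (a) its support $S=\mathrm{supp}\,C$ is a \emph{bottom support}, meaning every support reachable from $S$ can reach $S$ again, and (b) every inhabited state carries enough agents that no state of $S$ can be emptied in a way that is not undoable. Condition (a) is a purely combinatorial property of $S$, hence trivially threshold-determined. Condition (b) is where the count thresholds enter: a state holding many agents can safely lend one out and recover it, whereas a state holding a single agent may, if that agent leaves, render $q$ uninhabited and sever an observation needed to return — exactly the distinction between the finite coordinates of a cube and the coordinate with an infinite upper bound. The work is to show that (b) is controlled by a uniform cap: once $C(q)\geq t$ the state $q$ behaves as inexhaustible, and below $t$ the exact count matters but ranges over finitely many values.

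The main obstacle is the exact-count return demanded by the definition of a bottom configuration: from every reachable $D$ we must recover $C$ itself, not merely some configuration with support $S$. Support-level mutual reachability only restores the support, so I would need a reversibility argument showing that within a bottom support the shuffling of agents is genuinely invertible, combined with careful bookkeeping of the below-threshold coordinates, whose agents cannot be freely created or destroyed. Pinning down a single threshold $t$ that simultaneously renders every ``large'' coordinate inexhaustible, guarantees the strong connectivity used in the shuffling, and is uniform over all the infinitely many candidate configurations is the crux; I expect it to follow from the copycat lemma together with a bound, in terms of $|Q|$, on the number of intermediate states any single agent need visit.
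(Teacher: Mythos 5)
Your opening reduction is sound and matches the target of the paper's argument: the paper likewise proves a threshold statement, namely that for every bottom configuration $B$ and every $C\geq B$ with $B(q)\geq|Q|^4$ for all $q$ in the support of $C-B$, the configuration $C$ is again a bottom configuration. But two of the intermediate ``facts'' you propose to rest the proof on are false as stated, and the step you yourself identify as the crux is exactly the part that needs a real argument and is missing. First, the set of supports reachable from $C$ does \emph{not} depend only on $\mathrm{supp}\,C$: with the single transition $q_1\xrightarrow{q_1}q_2$, the configuration $\Lbag q_1\Rbag$ can only ever have support $\{q_1\}$, while $\Lbag q_1,q_1\Rbag$ reaches support $\{q_2\}$; reachable supports stabilise only once counts exceed a threshold, which is part of what has to be proved, not assumed. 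Second, strong connectivity of the single-agent move graph inside a support does not characterise mutual reachability even for large configurations: with $q_1\xrightarrow{q_2}q_2$ and $q_2\xrightarrow{q_1}q_1$ the move graph on $\{q_1,q_2\}$ is strongly connected, yet the configuration with $t$ agents in each state reaches the one with all $2t$ agents in $q_2$, from which there is no way back. So your condition (a) plus ``sufficiently populated'' is not the right characterisation: reversibility is governed by the supports visited \emph{along} an execution, not by the initial support.

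The genuine gap is the exact-count return above the threshold, which you defer with ``I would need a reversibility argument'' and ``I expect it to follow from the copycat lemma''. The paper closes precisely this gap with the Pruning Lemma (Lemma~\ref{lm:pruning}): if in some deanonymised execution more than $|Q|$ agents travel from $q$ to $q'$, one such agent can be deleted while preserving the trajectory counts of all other pairs. This enables an induction on $|C|$: remove one agent from an over-threshold state $q_0$, invoke the induction hypothesis that the smaller configuration is a bottom configuration to obtain a return execution, iterate that return to cancel the induced permutation of agents so that it becomes an exact undo, and re-insert the deleted agent to build a path from an arbitrary reachable $C'$ back to $C$ itself. Without the Pruning Lemma, or some equivalent surgery on executions, your copycat observations only let you recover \emph{some} configuration with the right support, not $C$ itself, and the proposal does not get past that point; nor does it address the converse direction needed for a counting-set characterisation, namely why no bottom configurations are missed by the finitely many cubes you would write down.
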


We use the pruning techniques from \cite{conf/apn/EsparzaRW19}.
The pruning approach  is based on
deanonymisation of the agents, 
giving agents identities and arbitrarily picking which specific agent 
performs the observation at each step in the execution.
Note that there are usually many ways to deanonymise a single execution.

\begin{lemma}[\cite{conf/apn/EsparzaRW19}, Pruning Lemma]
        \label{lm:pruning}
        Consider an immediate observation population protocol
        with the set of states $Q$,
        and a configuration $C'$
        reachable from  another configuration $C$.
	Consider an execution $E$ from $C$ to $C'$ 
	and its deanonymisation such that more than $|Q|$
	agents go from some state $q$ to a state $q'$,
	where $q$ and $q'$ might be the same state.
	Then there is an execution from $C-\Lbag{}q\Rbag$
	to $C'-\Lbag{}q'\Rbag{}$
	and its deanonymisation $E'$
	where one less agent goes from $q$ to $q'$
	and for all other pairs of states the same 
	number of
	agents go between them.
\end{lemma}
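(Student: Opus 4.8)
The plan is to exploit the one feature that distinguishes immediate observation protocols from general ones: in a transition $q_1 \xrightarrow{q_2} q'_1$ the observed agent keeps its state $q_2$. Working with the given deanonymised execution $E$, I record for each agent its \emph{trajectory}, the sequence of states it occupies over time, which changes only at the steps where that agent is the active observer. The engine of the whole argument is a \emph{substitution principle}: at any step, the observed agent may be replaced by any other agent that occupies the same state at that moment, without affecting the observer's move or any other agent. Consequently, deleting an agent $a$ from the run is legitimate exactly when every step in which $a$ is the \emph{observed} agent can be reassigned to a co-located agent; the steps in which $a$ is itself the observer simply vanish. Since $a$ starts in $q$ and ends in $q'$, such a deletion lowers the count of agents going from $q$ to $q'$ by one and leaves every other source--target count untouched, which is precisely the bookkeeping demanded by the conclusion.

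It would be too much to hope that some fixed one of the more than $|Q|$ special agents can be deleted outright: an agent may, at some step, be the \emph{unique} occupant of its current state while being observed, and then no substitute exists. The idea that makes the bound $|Q|$ work is therefore to remove not a single physical agent but a \emph{thread} --- a virtual agent running from $q$ to $q'$ that follows one special agent for a while and then, at a moment when two special agents share a state, hops to another. Because every hop takes place between two agents that both run from $q$ to $q'$ and happens at a common state, deleting the thread splices the untouched portions of the involved agents back together into one fewer trajectory from $q$ to $q'$, so again only the $(q,q')$ entry of the source--target matrix decreases and all others are preserved.

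The deletion of such a thread is valid provided its current carrier is never the unique occupant of its state at a step where that carrier is observed; call such a step \emph{critical}. The heart of the proof, and the step I expect to be the main obstacle, is to schedule the thread --- to choose the piecewise-constant assignment of a carrier over time, switching carriers only at shared-state moments --- so that the thread is off any agent at that agent's critical steps. Here the counting bound enters: if every special agent carried a critical step, then assigning to each agent the state of one of its critical observations would, since there are more than $|Q|$ special agents but only $|Q|$ states, force coincidences that I will use to route the thread away from each impending critical observation through an earlier shared-state hop. I would first reduce trajectory complexity by cutting cycles, again using the substitution principle to reassign any observations made during a cut cycle, so that each trajectory visits at most $|Q|$ states and the scheduling takes place over a bounded structure; then I would establish feasibility of the dodging schedule by an exchange and pigeonhole argument over the at most $|Q|$ distinct states appearing along the threads. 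Verifying that the spliced run is a genuine execution from $C - \Lbag q \Rbag$ to $C' - \Lbag q' \Rbag$, and that its induced deanonymisation realises exactly the claimed source--target counts, is then a routine check built entirely from the substitution principle.
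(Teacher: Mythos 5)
The paper never proves this lemma: it is imported verbatim from \cite{conf/apn/EsparzaRW19}, so your attempt can only be judged against what a correct argument must accomplish. Your two concrete ingredients --- the substitution principle and the splice-and-delete bookkeeping for a ``thread'' that hops between co-located agents of the same source--target class --- are sound. The fatal problem is the overall shape of the plan: every execution you can possibly output follows the given execution $E$ step by step, keeping all steps except those where the thread is the observer and merely re-targeting observations of the thread. That would prove an ``in-place pruning'' statement strictly stronger than the lemma, and that stronger statement is \emph{false}, for every number of agents, so no refinement of your pigeonhole dodging scheduler can close the gap. Concretely, take $Q=\{A,B,C\}$ and four agents $a_1,a_2,a_3,a_4$, all going from $A$ to $A$, with steps (observer's move written $q_1\xrightarrow{q_2}q_1'$, observed agent named): (1) $a_4\colon A\xrightarrow{A}B$ observing $a_1$; (2) $a_3\colon A\xrightarrow{A}B$ observing $a_1$; (3) $a_3\colon B\xrightarrow{B}C$ observing $a_4$; (4) $a_4\colon B\xrightarrow{C}C$ observing $a_3$; (5) $a_1\colon A\xrightarrow{C}B$ observing $a_3$; (6) $a_3\colon C\xrightarrow{B}B$ observing $a_1$; (7) $a_4\colon C\xrightarrow{A}B$ observing $a_2$; (8) $a_1\colon B\xrightarrow{A}A$ observing $a_2$; (9) $a_3\colon B\xrightarrow{A}A$ observing $a_1$; (10) $a_4\colon B\xrightarrow{A}A$ observing $a_1$. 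In steps (3), (4), (6), (7) the observed agent ($a_4$, $a_3$, $a_1$, $a_2$ respectively) is the only agent in its state apart from the observer itself, and the observer genuinely changes state, so the step can be neither dropped nor re-targeted; hence the thread must be off $a_4$ at (3), off $a_3$ at (4), off $a_1$ at (6), off $a_2$ at (7). Now track co-location: after step (4) the excluded agents $a_3,a_4$ sit in $C$ while $a_1,a_2$ sit in $A$, so the thread is confined to $\{a_1,a_2\}$; step (5) puts $a_1$ alone in $B$ and $a_2$ alone in $A$, and neither is co-located with anybody again before step (7), so the thread is committed to a single one of them --- but (6) forbids $a_1$ and (7) forbids $a_2$. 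No thread schedule exists. The trap scales to arbitrarily many agents (park already-``assassinated'' agents in a state disjoint from the survivors and use them as the observers for the next assassination), so raising the bound $|Q|$ does not help either.

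The lemma is nevertheless true in this example, because its conclusion constrains only the source--target \emph{counts} of $E'$: one may take $E'$ to be the empty execution of three agents resting in $A$. This is the freedom your approach never uses and any correct proof must use: the surviving agents' trajectories may be discarded wholesale (only their endpoints matter) and the run rebuilt globally. Indeed, in the example above even arbitrary reordering cannot rescue an in-place deletion: without $a_4$, agent $a_1$ can enter $B$ only by observing $C$, and $a_3$ can enter $C$ only by observing another agent in $B$ --- a circular dependency --- so keeping the other three trajectories intact is impossible no matter how steps are scheduled. Your cycle-cutting preprocessing runs into the same wall rather than around it: cutting an agent's cycle requires substitutes for the observations that agent \emph{received} during the cycle, and here each agent's excursion exists precisely so that it can be observed while isolated, so the per-agent cuts block one another and only a simultaneous global deletion (which your substitution principle does not justify) yields a run. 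A last, minor inaccuracy: ``unique occupant'' must exclude the observer as a potential substitute, since an agent cannot observe itself --- steps (3) and (4) above exploit exactly this case.
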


\begin{proof}[Proof of the theorem~\ref{thm:io-bottom}]
        Consider an immediate observation population protocol
        with the set of states $Q$.

        We prove the claim in the following equivalent form.
        For each bottom configuration $B$,
        each configuration $C\geq{}B$
        such that for each $q$ in the support of $C-B$
        we have $B(q)\geq|Q|^4$
        is also a bottom configuration.

        The reformulated claim is proven by induction
        over the size $|C|$ if the configuration $C$.
        If $|C|=|B|$ we have $C=B$ and the claim 
        is obviously true.
	Otherwise let $q_0$ be such a state that
	$C(q_0)>B(q_0)\geq{}|Q|^4$.
	Consider any configuration $C'$ reachable from $C$
	and a deanonymised execution $E$ from $C'$ to $C$.
	By the pigeonhole principle, there is a state $q_1$
	such that more than $|Q|^3$ agents
	go from $q_0$ to $q_1$.

	We can prune the execution to obtain an execution
	$E^\downarrow$
	from
	$C-\Lbag{}q_0\Rbag{}$
	to
	$C'-\Lbag{}q_1\Rbag{}$.
	Note that $C-\Lbag{}q_0\Rbag{}\geq{}B$.
	By the induction hypothesis,
	$C-\Lbag{}q_0\Rbag{}$ is a bottom configuration.
	Thus there is an execution
	from
	$C'-\Lbag{}q_1\Rbag{}$
	to
	$C-\Lbag{}q_0\Rbag{}$.
	Let's deanonymise it;
	if going from 
	$C-\Lbag{}q_0\Rbag{}$
	to
	$C'-\Lbag{}q_1\Rbag{}$
	and back permutes the agents,
	repeat this procedure the order of the permutation
	times.
	This yields an execution
	$E^{\downarrow-1}$
	from
	$C'-\Lbag{}q_1\Rbag{}$
	to
	$C-\Lbag{}q_0\Rbag{}$
	such that it moves all the agents back
	to the same states, undoing the execution $E^\downarrow$ from 
	$C-\Lbag{}q_0\Rbag{}$
	to
	$C'-\Lbag{}q_1\Rbag{}$.

	Adding a non-interacting agent in the state $q_1$
	provides an execution from $C'$ to $C-\Lbag{q_0}\Rbag + \Lbag{q_1}\Rbag$
	with most of the agents going in the opposite direction
	compared to $E$.
	Combining this with the execution $E$,
	we obtain an execution from $C$ to $C-\Lbag{q_0}\Rbag + \Lbag{q_1}\Rbag$
	with only one agent going from $q_0$ to $q_1$ and the rest
	eventually going from their states back to the same states.
	Now we have at least $C(q_0)-1\geq{}|Q|^4$ agents
	going from $q_0$ to $q_0$.
	We apply the lemma~\ref{lm:pruning}
	for obtain an execution
	from
	$C-\Lbag{q_0}\Rbag$
	to
	$C-2\times\Lbag{q_0}\Rbag + \Lbag{q_1}\Rbag$.
	We use again that 
	$C-\Lbag{q_0}\Rbag$
	is a bottom configuration
	to obtain an execution from 
	$C-2\times\Lbag{q_0}\Rbag + \Lbag{q_1}\Rbag$
	to
	$C-\Lbag{q_0}\Rbag$.
	Adding a non-interacting agent in the state $q_0$
	provides an execution from 
	$C-\Lbag{q_0}\Rbag + \Lbag{q_1}\Rbag$
	to $C$,
	proving that we can reach $C$ from $C'$ 
	via 
	$C-\Lbag{q_0}\Rbag + \Lbag{q_1}\Rbag$.
	As $C'$ was an arbitrary configuration reachable from $C$,
	this concludes the proof that $C$ is a bottom configuration.
\end{proof}

This structural result implies the desired constructive expressive power
upper bound using one more lemma from~\cite{conf/apn/EsparzaRW19}.

\begin{lemma}[\cite{conf/apn/EsparzaRW19}]
        The set of configurations
        reachable from a given counting set of configurations
        is also a counting set.
\end{lemma}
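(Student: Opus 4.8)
The plan is to route everything through the standard characterisation of counting sets as \emph{threshold-saturated} sets: a set $S\subseteq\mathbb{N}^Q$ is a counting set if{}f there is a threshold $t$ such that whether $C\in{}S$ depends only on the truncated vector $(\min(C(q),t))_{q\in{}Q}$. I would first record this equivalence (it is routine: a cube saturated at $t$ has all lower bounds $\le{}t$ and all finite upper bounds $<t$, and finite unions of such cubes are exactly the unions of truncation classes). It then suffices, given a counting set $X$ saturated at some $t_0$, to produce a larger threshold $t_1$ at which $\mathrm{post}^*(X)$ is saturated.

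Two reductions come first. Since $\mathrm{post}^*(\bigcup_iX_i)=\bigcup_i\mathrm{post}^*(X_i)$ and counting sets are closed under finite union, I reduce to a single cube $X$ with lower bounds $\ell_q\le{}t_0$ and upper bounds $u_q$ each either infinite or strictly below $t_0$. If all upper bounds are finite then $X$ is a finite set of configurations; because population protocols preserve the number of agents, $\mathrm{post}^*(X)$ then consists of configurations of boundedly many sizes and is itself finite, hence a counting set. So I may assume at least one coordinate is unbounded and concentrate on the genuinely infinite case.

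For the main argument I would set $t_1=\max(|Q|^2,|Q|\,t_0)+1$ and show that $\mathrm{post}^*(X)$ is closed under adding one agent to any state $q$ with $C(q)\ge{}t_1$ and under removing one agent from any state $q$ with $C(q)>t_1$. Closure under these two moves is exactly $\equiv_{t_1}$-saturation, since any two truncation-equivalent configurations are joined by a chain that first raises the coordinates which must grow and then lowers those which must shrink, keeping every touched coordinate above $t_1$ throughout. For both moves I take $C\in\mathrm{post}^*(X)$ witnessed by $C_0\to^*C$ with $C_0\in{}X$, deanonymise the execution, and observe that the $C(q)\ge{}t_1>|Q|^2$ agents ending in $q$ must, by pigeonhole, contain more than $|Q|$ agents travelling from a single source state $q_0$; in particular $C_0(q_0)\ge{}C(q)/|Q|>t_0$. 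For the \emph{decrease} direction I apply the Pruning Lemma~\ref{lm:pruning} to these $q_0\to{}q$ agents to get $C_0-\Lbag{}q_0\Rbag\to^*C-\Lbag{}q\Rbag$; here $C_0(q_0)-1\ge{}t_0\ge\ell_{q_0}$ and removing an agent cannot violate an upper bound, so $C_0-\Lbag{}q_0\Rbag\in{}X$ and $C-\Lbag{}q\Rbag\in\mathrm{post}^*(X)$. For the \emph{increase} direction I use the dual copycat construction: a fresh agent placed at $q_0$ can shadow one $q_0\to{}q$ agent, repeating each of its observations immediately afterwards — legitimate precisely because an observed agent is left unchanged in an immediate observation protocol — and so also ends in $q$, giving $C_0+\Lbag{}q_0\Rbag\to^*C+\Lbag{}q\Rbag$; and $C_0+\Lbag{}q_0\Rbag\in{}X$ because a finite $u_{q_0}<t_0$ would contradict $C_0(q_0)>t_0$, forcing $u_{q_0}=\infty$ so that the extra agent cannot exceed any bound.

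The hard part, and the reason $t_1$ must be taken large relative to both $|Q|$ and $t_0$, is keeping the modified source configuration inside the cube $X$ while simultaneously having enough agents on a single source–target pair to invoke pruning or copycat. The pigeonhole bound forces more than $|Q|$ agents on the pair, and membership $C_0\in{}X$ forces the source coordinate $q_0$ above $t_0$; it is exactly this enforced largeness that both licenses removal (the lower bound $\ell_{q_0}\le{}t_0$ has slack) and rules out a finite upper bound at $q_0$ (licensing addition). The one genuinely protocol-specific step is verifying the copycat construction in full — that the shadow agent never requires an observation that is unavailable and never disturbs the original execution; everything else is bookkeeping. Assembling the increase and decrease closures yields $\equiv_{t_1}$-saturation of $\mathrm{post}^*(X)$, and undoing the reduction to a single cube shows $\mathrm{post}^*$ of a counting set is a counting set.
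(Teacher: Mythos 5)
The paper does not prove this lemma at all --- it is imported verbatim from \cite{conf/apn/EsparzaRW19} --- so there is no in-paper proof to compare against; what can be said is that your argument is a correct, self-contained reconstruction in exactly the style of the paper's toolkit. Your reduction of counting sets to threshold-saturated sets is sound (lower bounds $\le t$, finite upper bounds $<t$, and truncation classes are cubes), the chain argument correctly reduces saturation of $\mathrm{post}^*(X)$ to closure under single-agent addition at coordinates $\ge t_1$ and removal at coordinates $>t_1$, and the choice $t_1=\max(|Q|^2,|Q|t_0)+1$ does simultaneously force, via pigeonhole on source states, more than $|Q|$ agents on one $q_0\to q$ pair (licensing Lemma~\ref{lm:pruning}) and $C_0(q_0)>t_0$ (giving slack below for the lower bound and forcing $u_{q_0}=\infty$). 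The removal direction then uses precisely the Pruning Lemma the paper already states, and the addition direction uses the copycat construction that the paper itself uses informally in the proof of Theorem~\ref{thm:io-bottom} (``adding a non-interacting agent'' and replaying observations). The only step you assert rather than prove is the copycat lemma itself --- that a duplicate agent starting in $q_0$ can replay every observation of the agent it shadows because observed agents are unchanged in immediate observation protocols --- but this is true and standard, and you correctly flag it as the one protocol-specific verification needed. In short: no gap; this is essentially the argument of the cited reference, assembled from ingredients the paper already has on hand.
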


\begin{theorem}
        The class of immediate observation population protocols ensures at most
        the class of counting output conditions.
\end{theorem}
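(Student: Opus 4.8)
The plan is to mirror the structure of the proof of Theorem~\ref{thm:pp-con-only-semilinear}, but to exploit the much cleaner structural description now available in the immediate observation setting. As in that proof, the starting observation is that every fair execution reaches a bottom configuration and returns to it infinitely often, so that any condition $\psi'$ ensured by the protocol must be satisfied by the output of \emph{every} bottom configuration reachable from an input configuration. Hence it suffices to exhibit a size-flexible \emph{counting} set $\mathcal{B}$ of reachable bottom configurations; its image under the output-merging map $C \mapsto (x \mapsto \sum_{o(q)=x} C(q))$ will then serve as the required size-flexible counting condition $\psi$ with $\psi \Rightarrow \psi'$.

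First I would show that the set $\mathcal{B}$ of bottom configurations reachable from some input configuration is itself a counting set. The set of input configurations is a single cube, hence counting; by the preceding lemma on reachability from counting sets, its post-image $\mathrm{post}^*$ is again counting; and by Theorem~\ref{thm:io-bottom} the set of all bottom configurations is counting. Since counting sets, being finite unions of cubes, are closed under intersection — the intersection of two cubes is a cube obtained by intersecting the bounds coordinatewise, and one distributes over the unions — the set $\mathcal{B}$ is the intersection of two counting sets and is therefore counting. This is exactly the point where the immediate observation case becomes simpler than the general one: there is no need for the almost-semilinear approximation and the divisibility extraction used in Theorem~\ref{thm:pp-con-only-semilinear}.

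Next I would verify the two soft closure facts together with the fairness argument. For closure under the output map, a cube $\prod_q [l_q, u_q]$ is sent to the cube $\prod_x [\sum_{o(q)=x} l_q, \sum_{o(q)=x} u_q]$, because the states feeding distinct output values are disjoint and the coordinatewise sum of integer intervals is again an integer interval; taking finite unions, the image of a counting set is counting. For the fairness argument, given a bottom configuration $B$ reachable from an input $C_0$, I would take a finite execution $C_0 \to B$ and extend it to a fair execution; since $B$ is a bottom configuration it recurs infinitely often, while by hypothesis the execution reaches some $C'$ ensuring $\psi'$, and picking an occurrence of $B$ after $C'$ shows $B$ is reachable from $C'$ and hence satisfies $\psi'$. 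Finally, size-flexibility of $\mathcal{B}$ follows because for each size $n$ an input configuration of $n$ agents placed in an input state (which exists since $I \neq \emptyset$) reaches, along any fair execution, a bottom configuration of the same size lying in $\mathcal{B}$.

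I expect the only genuine subtlety to be this last fairness argument, which must cover \emph{every} reachable bottom configuration rather than merely the one visited by some particular execution; everything else reduces to routine closure properties of counting sets once Theorem~\ref{thm:io-bottom} and the post-image lemma are in hand.
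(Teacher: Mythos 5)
Your proposal is correct and follows essentially the same route as the paper: the set of reachable bottom configurations is the intersection of the counting set of bottom configurations (Theorem~\ref{thm:io-bottom}) with the counting post-image of the input cube, and its image under the output map is the required size-flexible counting condition. The paper states this in two sentences, relying on the fairness/bottom-configuration discussion already given for the general population protocol case; your additional detail on closure under the output map and on size-flexibility fills in exactly the steps the paper leaves implicit.
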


\begin{proof}
        We observe that the set of reachable bottom configurations
        is a counting set as an intersection 
        of the counting set of bottom configurations
        and the counting set of configurations reachable from input configuration.
        Then its image under the output function is a size-flexible counting set
        implying the ensured output condition.
\end{proof}

\subsection{Verification complexity for constructive immediate observation protocols}

In this section we show that the relatively low verification complexity for 
immediate observation protocols is also applicable in the case of 
constructive expressive power.

\begin{theorem}
        \label{thm:io-con-verify}
        The problem of verifying whether a given immediate observation protocol $P$
        ensures a given counting output condition $\psi$ given as a list of cubes
        with bounds written in unary
        is in $\mathbf{PSPACE}$.
\end{theorem}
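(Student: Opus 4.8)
The plan is to reduce the verification problem to the existence of a single, polynomially bounded witness configuration and then to verify that witness by bounded reachability search in polynomial space. First I would establish a structural characterisation of ensuring. Using the theorem of~\cite{journals/acta/EsparzaGLM17} (every fair execution reaches a bottom configuration) together with the fact that the configurations of a fixed size form a finite set, one can build, from any bottom configuration $B$ reachable from an input configuration, a fair execution that visits every configuration of the mutual-reachability class of $B$ infinitely often. One also checks that every configuration reachable from a bottom configuration is itself a bottom configuration of the same class. Combining these, I would prove that $P$ ensures $\psi$ from an input configuration $C$ if and only if every bottom configuration reachable from $C$ has an output multiset satisfying $\psi$: the forward direction runs the fair execution through $B$ and uses that the reached $\psi$-ensuring configuration can still reach $B$, the backward direction uses that a reached bottom configuration already ensures $\psi$ because its whole post-image consists of reachable bottom configurations. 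Quantifying over all input configurations, $P$ ensures $\psi$ if and only if \emph{no} bottom configuration reachable from an input configuration has its output in the complement~$\overline{\psi}$.

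Since $\mathbf{PSPACE}$ is closed under complementation and $\mathbf{PSPACE}=\mathbf{NPSPACE}$, it suffices to decide this negation nondeterministically: guess a configuration $B$ and verify in polynomial space that (i) $B$ is reachable from some configuration of size $|B|$ supported on $I$; (ii) $B$ is a bottom configuration; and (iii) the output multiset $x\mapsto\sum_{o(q)=x}B(q)$ lies in $\overline{\psi}$. The complement of a counting set is again a counting set whose finite thresholds are bounded by the unary size of $\psi$, so (iii) is a direct arithmetic test once $B$ is available. Provided $B$ has size polynomial in the input, (i) and (ii) are routine: a configuration of size $|B|$ is described in $O(|Q|\log|B|)$ bits, reachability between same-size configurations is a reachability search in the exponentially large but poly-space-addressable graph of configurations of size $|B|$, and bottomness is the complementary statement that no configuration reachable from $B$ fails to reach $B$ back. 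Each configuration encountered along the way has a polynomial-size description, so both checks are in $\mathbf{NPSPACE}=\mathbf{PSPACE}$.

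The crux, and the part requiring the most care, is to guarantee a \emph{polynomially bounded} witness, since a priori a violating reachable bottom configuration could be enormous. Here I would use that all the sets in play are counting sets with polynomially bounded thresholds: bottom configurations are characterised in Theorem~\ref{thm:io-bottom} through the threshold $|Q|^4$, the configurations reachable from the input cube form a counting set with thresholds polynomial in $|Q|$ by the pruning analysis of~\cite{conf/apn/EsparzaRW19}, and the constraints of $\overline{\psi}$ are bounded by the unary description of $\psi$. Using the Pruning Lemma~\ref{lm:pruning}, I would argue that whenever some reachable bottom configuration has output in $\overline{\psi}$, one such configuration exists with every coordinate bounded by a polynomial in $|Q|$ and the size of $\psi$: any coordinate of $B$ exceeding the maximum of these thresholds can be pruned down while keeping the source an input configuration (so reachability from the input is preserved), while remaining a bottom configuration by Theorem~\ref{thm:io-bottom} (coordinates staying above $|Q|^4$), and while keeping each affected output sum on the same side of the bounded constraints defining $\overline{\psi}$. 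This bound makes the guessed witness of polynomial size, so the search runs in polynomial space and the theorem follows. The delicate step is to orchestrate the pruning so that reachability-from-input, bottomness, and membership of the output in $\overline{\psi}$ are preserved simultaneously.
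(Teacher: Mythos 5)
Your reduction of ensuring to the condition ``no bottom configuration reachable from an input configuration has output in $\overline{\psi}$'' is correct (and arguably more carefully justified than the corresponding characterisation step in the paper's own proof), and the guess-and-check framework --- guess a witness, then verify reachability from an input configuration, bottomness, and violation of $\psi$ by Savitch-style search over the finite graph of configurations of a fixed size --- would indeed run in polynomial space \emph{if} a witness of polynomial (or even exponential) number of agents were guaranteed to exist. The gap sits exactly at the step you flag as the crux. Theorem~\ref{thm:io-bottom} states that the bottom configurations form a counting set, and its proof establishes an \emph{upward} closure property: one may \emph{add} agents to states that already hold at least $|Q|^4$ agents of a bottom configuration. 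It does not say, and its proof does not show, that one may \emph{remove} an agent from a heavily populated state of a bottom configuration and remain bottom; a set such as $\{C : C(q)\ge N\}$ for arbitrarily large $N$ satisfies the upward closure property while containing no small elements. Moreover, the finiteness of the union of cubes in Theorem~\ref{thm:io-bottom} rests on a Dickson-style argument and yields no effective bound on the cube boundaries, so ``thresholds polynomial in $|Q|$'' for the bottom set is unsubstantiated; and since bottomness is defined by an alternation ($\forall C'\in\mathrm{post}^*(C):\ C\in\mathrm{post}^*(C')$), it is not one of the sets whose bounds the closure lemma of~\cite{journals/corr/abs-1912-06578} controls either. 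The Pruning Lemma~\ref{lm:pruning} alone does not let you shrink $B$ while simultaneously preserving bottomness. Without some effective bound here, your nondeterministic machine has no bound on the size of the configuration it must guess, and the space bound fails.

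The paper avoids this problem entirely: its verification proof never touches bottom configurations. It expresses the condition symbolically as emptiness of a counting set obtained from $\mathcal{I}$ and $\hat{\psi}$ by complementation and $\mathrm{pre}^*$, and invokes the lemma from~\cite{journals/corr/abs-1912-06578} stating that these operations preserve at-most-exponential constants together with $\mathbf{PSPACE}$-decidable membership and emptiness. If you wish to keep your witness-based route, the missing ingredient is an effective (exponential would suffice) bound on the counting set of reachable violating bottom configurations; establishing such a bound is a genuine piece of additional work that neither Theorem~\ref{thm:io-bottom} nor the Pruning Lemma gives you as stated.
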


Here we use a convenient complexity claim from \cite{journals/corr/abs-1912-06578}.

\begin{lemma}[\cite{journals/corr/abs-1912-06578}, claim in the proof of Theorem 4.50]
        Given two functions that produce counting sets 
        with membership and emptiness in $\mathbf{PSPACE}$
        and at most exponential constants
        from a counting set and a protocol,
        their boolean combinations as well 
        as pre-image and post-image
        also have the same properties.

        Let $\mathcal{S}_1$ and $\mathcal{S}_2$ be two functions 
        that take as arguments an IO protocol $P$ and
        a counting constraint $X$,
        and return counting sets $\mathcal{S}_1(P,X)$
        and $\mathcal{S}_2(P,X)$ respectively.
        Assume that $\mathcal{S}_1(P,X)$  and $\mathcal{S}_2(P,X)$ 
        use bounds
        at most exponential in the size of the $(P,X)$,
         and have
        $\mathbf{PSPACE}$-decidable membership
        (given input $(C,P,X)$, decide whether $C\in \mathcal{S}_i(P,X)$)
        and emptiness.

        Then the same is true about the counting sets
        $\mathcal{S}_1(P,X)\cap\mathcal{S}_2(P,X)$,
        $\mathcal{S}_1(P,X)\cup\mathcal{S}_2(P,X)$,
        $\overline{\mathcal{S}_1(P,X)}$,\\
        $\mathrm{pre}^*(\mathcal{S}_1(P,X))$,
        $\mathrm{post}^*(\mathcal{S}_1(P,X))$.
\end{lemma}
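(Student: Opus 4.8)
The plan is to isolate what genuinely has to be proved for each of the five operations and to let emptiness come for free. I claim it suffices to establish, for each output function, the three invariants: (i) the output is a counting set, (ii) all of its finite bounds are at most exponential in $|(P,X)|$, and (iii) membership is decidable in $\mathbf{PSPACE}$. The reason is a uniform \emph{capping} argument. If $\mathcal S$ is a counting set all of whose finite bounds are at most $N$, write $\widehat C$ for the configuration obtained from $C$ by replacing each coordinate $C(q)$ with $\min(C(q),N+1)$. Membership of $C$ in any single cube only compares each $C(q)$ against bounds that are either infinite or at most $N$, so capping at $N+1$ preserves every such comparison, and hence $C\in\mathcal S$ iff $\widehat C\in\mathcal S$. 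Consequently $\mathcal S$ is empty iff no point of the grid $\{0,\dots,N+1\}^Q$ lies in $\mathcal S$. With $N$ exponential and $|Q|$ polynomial, this grid has $2^{\mathrm{poly}}$ points, each encodable in polynomially many bits; a $\mathbf{PSPACE}$ machine enumerates them with a polynomial counter and calls the membership test (reusing space) on each. Thus (i)--(iii) already yield $\mathbf{PSPACE}$ emptiness, and only (i)--(iii) remain to be checked per operation.

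For the Boolean operations the three invariants are immediate. Union is the concatenation of the two cube lists, so its bound is the maximum of the two input bounds and its membership oracle is the disjunction of the given ones. Intersection is a counting set because the intersection of two cubes is the cube of the coordinatewise interval intersections and intersection distributes over union; its bounds are again at most the maximum of the inputs, and membership is the conjunction of the oracles. For complement, the complement of one cube is a union of at most $2|Q|$ cubes, one per violated bound, using the shifted bounds $\ell_q-1$ and $u_q+1$; by De~Morgan and the intersection case the complement of a counting set is again a counting set with bounds at most $N+1$, and its membership oracle is the negation of the given one. In each case invariant (iii) is a fixed Boolean combination of $\mathbf{PSPACE}$ tests, hence in $\mathbf{PSPACE}$.

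The substantive cases are $\mathrm{pre}^*$ and $\mathrm{post}^*$. For invariant (i) I would invoke the immediate-observation reachability theory of \cite{conf/apn/EsparzaRW19}: the reachability closure of a counting set is again a counting set, and the accompanying acceleration lemmas bound the constants appearing in the resulting cubes, which simultaneously delivers invariant (ii) with an exponential bound; $\mathrm{pre}^*$ is treated identically, exchanging source and target. For invariant (iii), membership $C\in\mathrm{post}^*(\mathcal S_1)$ asks whether some $C_0\in\mathcal S_1$ with $C_0\to^* C$ exists, which I would test as an intersection-emptiness. The single-target pre-image $\mathrm{pre}^*(\{C\})$ is itself a counting set with exponential bounds and $\mathbf{PSPACE}$ membership (single-configuration reachability in an IO protocol is in $\mathbf{PSPACE}$), so $C\in\mathrm{post}^*(\mathcal S_1)$ iff $\mathrm{pre}^*(\{C\})\cap\mathcal S_1\neq\emptyset$, and emptiness of this intersection is exactly the capping subroutine of the first paragraph applied to two operands that already satisfy (i)--(iii). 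Membership in $\mathrm{pre}^*(\mathcal S_1)$ is dual, via $\mathrm{post}^*(\{C\})\cap\mathcal S_1$.

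The hardest step is invariant (ii) for $\mathrm{pre}^*$ and $\mathrm{post}^*$: knowing merely that the closure is \emph{some} counting set does not suffice, because the grid-search engine of the first paragraph runs in $\mathbf{PSPACE}$ only once the bounds are certified to be at most exponential. This is precisely where the IO-specific acceleration estimates of \cite{conf/apn/EsparzaRW19} are indispensable, guaranteeing both that accelerating a period never pushes a finite bound past an exponential threshold and that single-configuration reachability admits witnesses short enough to verify in polynomial space. Everything else reduces to bookkeeping of bounds and wrapping the given oracles in fixed Boolean combinations.
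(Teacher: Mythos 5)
This lemma is not proved in the paper at all: it is imported verbatim as a claim from the proof of Theorem~4.50 of \cite{journals/corr/abs-1912-06578}, and the paper uses it as a black box in the proof of Theorem~\ref{thm:io-con-verify}. So there is no in-paper proof to match your argument against; what you have written is a reconstruction, and as far as I can check it is a correct one. Your decomposition is sound and in one respect cleaner than the statement itself: the capping argument (replace each coordinate by $\min(C(q),N+1)$, observe that membership in a counting set whose finite bounds are at most $N$ is invariant under this capping, then enumerate the grid $\{0,\dots,N+1\}^Q$ with a polynomial-size counter) correctly shows that $\mathbf{PSPACE}$ emptiness is a \emph{consequence} of $\mathbf{PSPACE}$ membership plus exponential bounds, rather than an extra property to be propagated. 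The Boolean cases are elementary and right (the cube-count blow-up under complement and intersection is harmless since you only track bounds and oracles, not explicit cube lists). For $\mathrm{pre}^*$/$\mathrm{post}^*$ you correctly isolate the two external facts that carry all the weight --- closure of counting sets under reachability with norm bounds polynomial in the norm of the argument, and $\mathbf{PSPACE}$ point-to-point reachability for immediate observation protocols --- and these are exactly the results of \cite{conf/apn/EsparzaRW19} and \cite{journals/corr/abs-1912-06578} that the original proof of Theorem~4.50 rests on, so invoking them is legitimate here. One point deserves emphasis beyond what you wrote: in the membership test $C\in\mathrm{post}^*(\mathcal{S}_1)$ via emptiness of $\mathrm{pre}^*(\{C\})\cap\mathcal{S}_1$, the norm of $\{C\}$ is the magnitude of $C$, which is exponential in the binary encoding of the query; the argument therefore needs the cited norm bound to be polynomial in the norm (not in the encoding size) of the input set, which it is --- you flag this dependence, but it is the one place where a wrong reading of the cited estimates would silently break the proof.
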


\begin{proof}[Proof of the theorem~\ref{thm:io-con-verify}]
        Given the output condition $\psi$
        and a protocol,
        we have a counting set $\hat{\psi}$
        of configurations satisfying $\psi$.
        The protocol ensures $\psi$
        if{}f 
        each reachable configuration
        can still reach a configuration in $\hat{\psi}$.
        In other words,
        no input configuration can reach a configuration
        outside the pre-image of $\hat{\psi}$.
        This can be expressed 
        as emptiness of $\mathcal{I}\cap\mathrm{pre}^*(\overline{\mathrm{pre}^*(\hat{\psi})})$
        where $\mathcal{I}$ is the set of input configurations.
        As $\hat{\psi}$ and $\mathcal{I}$ are decidable in $\mathbf{PSPACE}$,
        repeated application of the lemma 
        yields $\mathbf{PSPACE}$-decidability 
        of the desired emptiness.
        This concludes the proof.
\end{proof}

\begin{remark}
        The proof of $\mathbf{PSPACE}$-hardness
        of verification of immediate observation protocols
        given in \cite{conf/apn/EsparzaRW19}
        uses the constantly false protocol,
        thus it can be interpreted as hardness
        of verifying whether a protocol ensures $D\mapsto{}D(true)=0$.
\end{remark}

\section{Conclusion}

We have introduced a notion of constructive expressive power 
for population protocols and have shown that both for general population
protocols and for immediate observation population protocols
it coincides with the expressive power in the classical setting 
of computing predicates.
We have also shown that the relatively low verification complexity
for immediate observation protocols is preserved.

The aim of being able to verify deployment strategies
suggests 
further work in the direction of modelling failures,
as well as self-stabilisation (i.e. making all states input states).
On the other hand, deployment strategies often operate 
on heterogeneous fleets,
requiring input-output conditions instead of pure output conditions
to verify
(e.g. we want to assign a file-server role to some server with sufficient
storage attached).

Input-output conditions also seem to be a promising direction
for achieving generic composition of population protocol.

\bibliography{construction-population-protocols}

\begin{thebibliography}{10}

\bibitem{conf/podc/AngluinADFP04}
Dana Angluin, James Aspnes, Zoë Diamadi, Michael~J. Fischer, and René
  Peralta.
\newblock Computation in networks of passively mobile finite-state sensors.
\newblock In Soma Chaudhuri and Shay Kutten, editors, {\em PODC}, pages
  290--299. ACM, 2004.
\newblock \href {https://doi.org/10.1145/1011767.1011810}
  {\path{doi:10.1145/1011767.1011810}}.

\bibitem{journals/dc/AngluinADFP06}
Dana Angluin, James Aspnes, Zoë Diamadi, Michael~J. Fischer, and René
  Peralta.
\newblock Computation in networks of passively mobile finite-state sensors.
\newblock {\em Distributed Computing}, 18(4):235--253, 2006.

\bibitem{conf/wdag/AngluinAE06}
Dana Angluin, James Aspnes, and David Eisenstat.
\newblock Fast computation by population protocols with a leader.
\newblock In Shlomi Dolev, editor, {\em IN DISTRIBUTED COMPUTING: 20TH
  INTERNATIONAL SYMPOSIUM, DISC 2006}, volume 4167 of {\em Lecture Notes in
  Computer Science}, pages 61--75. Springer, 2006.

\bibitem{journals/dc/AngluinAER07}
Dana Angluin, James Aspnes, David Eisenstat, and Eric Ruppert.
\newblock The computational power of population protocols.
\newblock {\em Distributed Computing}, 20(4):279--304, 2007.
\newblock URL: \url{https://arxiv.org/abs/cs/0608084}.

\bibitem{blondin2019succinct}
Michael Blondin, Javier Esparza, Blaise Genest, Martin Helfrich, and Stefan
  Jaax.
\newblock Succinct population protocols for presburger arithmetic.
\newblock {\em in submission}, 2019.
\newblock URL: \url{http://arxiv.org/abs/1910.04600}.

\bibitem{conf/podc/BlondinEJM17}
Michael Blondin, Javier Esparza, Stefan Jaax, and Philipp~J. Meyer.
\newblock Towards efficient verification of population protocols.
\newblock In Elad~Michael Schiller and Alexander~A. Schwarzmann, editors, {\em
  PODC}, pages 423--430. ACM, 2017.

\bibitem{conf/wdag/DotyS15}
David Doty and David Soloveichik.
\newblock Stable leader election in population protocols requires linear time.
\newblock In Yoram Moses, editor, {\em DISC}, volume 9363 of {\em Lecture Notes
  in Computer Science}, pages 602--616. Springer, 2015.
\newblock URL: \url{https://arxiv.org/abs/1502.04246}.

\bibitem{journals/acta/EsparzaGLM17}
Javier Esparza, Pierre Ganty, Jérôme Leroux, and Rupak Majumdar.
\newblock Verification of population protocols.
\newblock {\em Acta Inf.}, 54(2):191--215, 2017.

\bibitem{journals/corr/abs-1912-06578}
Javier Esparza, Stefan Jaax, Mikhail~A. Raskin, and Chana Weil-Kennedy.
\newblock The complexity of verifying population protocols.
\newblock {\em in submission}, 2019.
\newblock URL: \url{https://arxiv.org/abs/1912.06578}.

\bibitem{esparza2019parameterized}
Javier Esparza, Mikhail Raskin, and Chana Weil-Kennedy.
\newblock Parameterized analysis of immediate observation petri nets, 2019.
\newblock URL: \url{http://arxiv.org/abs/1902.03025}.

\bibitem{conf/apn/EsparzaRW19}
Javier Esparza, Mikhail Raskin, and Chana Weil-Kennedy.
\newblock Parameterized analysis of immediate observation petri nets.
\newblock In Susanna Donatelli and Stefan Haar, editors, {\em Petri Nets},
  volume 11522 of {\em Lecture Notes in Computer Science}, pages 365--385.
  Springer, 2019.
\newblock cite arxiv:1902.03025.
\newblock URL: \url{http://arxiv.org/abs/1902.03025}.

\bibitem{Huang_2017}
Peng Huang, Chuanxiong Guo, Lidong Zhou, Jacob~R. Lorch, Yingnong Dang, Murali
  Chintalapati, and Randolph Yao.
\newblock Gray failure.
\newblock In {\em Proceedings of the 16th Workshop on Hot Topics in Operating
  Systems - {HotOS} {\textquotesingle}17}. {ACM} Press, 2017.
\newblock \href {https://doi.org/10.1145/3102980.3103005}
  {\path{doi:10.1145/3102980.3103005}}.

\bibitem{conf/birthday/Leroux12}
Jérôme Leroux.
\newblock Vector addition systems reachability problem (a simpler solution).
\newblock In Andrei Voronkov, editor, {\em Turing-100}, volume~10 of {\em EPiC
  Series in Computing}, pages 214--228. EasyChair, 2012.

\end{thebibliography}

\appendix

\end{document}